\newcommand{\objects}{\mathcal O}
\newcommand{\objtypes}{\Sigma}
\newcommand{\dom}{dom}
\newcommand{\codom}{codom}
\newcommand{\tup}[1]{\langle{#1}\rangle}
\newcommand{\allvars}{\mathcal X}
\newcommand{\varset}{X}
\newcommand{\nuvarset}{\Upsilon}
\newcommand{\listvarset}{\varset_{\mathit{list}}}
\newcommand{\vartype}{\textup{type}}
\newcommand{\activities}{\mathcal A}
\newcommand{\inflow}{F_{in}}
\newcommand{\outflow}{F_{out}}
\newcommand{\pre}[1]{\bullet{#1}}
\newcommand{\post}[1]{{#1}\bullet}
\newcommand{\invars}[1]{\textup{vars}_{in}(#1)}
\newcommand{\outvars}[1]{\textup{vars}_{out}(#1)}
\newcommand{\listtype}[1]{[{#1}]}
\newcommand{\colors}{\mathcal C}
\newcommand{\coloring}{\textup{color}}
\newcommand{\vars}[1]{\textup{vars}(#1)}
\newcommand{\tokens}{\vec\objects}
\newcommand{\goto}[1]{\mathrel{\raisebox{-1pt}{$\xrightarrow{#1}$}}}
\newcommand{\m}[1]{\mathsf{#1}}
\newcommand{\Minit}{M_{\mathit{init}}}
\newcommand{\Mfinal}{M_{\mathit{final}}}
\newcommand{\Mempty}{M_\emptyset}
\newcommand{\Cons}{\mathit{cons}}
\newcommand{\Prod}{\mathit{prod}}
\newcommand{\tlink}[1]{t_{#1}} % transition t_{\sigma_1,\sigma_o} to create links
\newcommand{\directmap}{\mathbb T_1}
\newcommand{\linkmap}{\mathbb T_R}
\tikzstyle{class}=[draw, rectangle, inner sep=1.5pt, line width=.7pt, scale=1, minimum width=24mm, minimum height=8mm]
\tikzstyle{place}=[draw, circle, inner sep=1.5pt, line width=.7pt, scale=.8, minimum width=6mm]
\tikzstyle{trans}=[draw, rectangle, inner sep=1.5pt, line width=.7pt, scale=.8, minimum width=6mm, minimum height=6mm, fill=gray!10]
\tikzstyle{silentTrans}=[draw, rectangle, inner sep=1.5pt, line width=.7pt, scale=.8, minimum width=2.5mm, minimum height=6mm, fill=black]
\tikzstyle{arc}=[draw, ->, line width=.5pt]
\tikzstyle{fatarc}=[draw, ->, line width=.5pt, double]
\tikzstyle{fatline}=[draw, -, line width=.5pt, double]
\tikzstyle{action}=[scale=.6]
\tikzstyle{starttoken}=[regular polygon, regular polygon sides=3,minimum width=3mm,fill=black,inner sep=0pt,rotate=30]
\tikzstyle{endtoken}=[regular polygon, regular polygon sides=4,minimum width=3mm,fill=black,inner sep=0pt]
\tikzstyle{pseudostart}=[regular polygon, regular polygon sides=3,minimum width=3mm,draw,inner sep=0pt, rotate=30]
\tikzstyle{pseudoend}=[regular polygon, regular polygon sides=4,minimum width=3mm,inner sep=0pt, draw]
\colorlet{frameColor}{yellow!60!white!40}
\colorlet{wheelColor}{red!60!white!50}
\colorlet{handleColor}{blue!70!white!30}
\colorlet{wheelFrameColor}{orange!50}
\colorlet{handleFrameColor}{green!100!white!30}
\tikzstyle{framePlace}=[place,fill=frameColor]
\tikzstyle{wheelPlace}=[place,fill=wheelColor]
\tikzstyle{handlePlace}=[place,fill=handleColor]
\tikzstyle{wheelFramePlace}=[place,fill=wheelFrameColor]
\tikzstyle{handleFramePlace}=[place,fill=handleFrameColor]
\colorlet{ocolor}{blue!70!green!20}
\colorlet{icolor}{yellow!90!red!20}
\colorlet{oicolor}{green!90!blue!20}
\tikzstyle{oplace}=[place,fill=ocolor]
\tikzstyle{otrans}=[trans,fill=ocolor]
\tikzstyle{iplace}=[place,fill=icolor]
\tikzstyle{itrans}=[trans,fill=icolor]
\tikzstyle{oiplace}=[place,fill=oicolor]
\tikzstyle{idplace}=[place,fill=red!30] 
\tikzstyle{dplace}=[place,fill=magenta!20]             
\tikzstyle{oidplace}=[place,fill=black!20]
\tikzstyle{oitrans}=[trans,fill=oicolor]
\tikzstyle{mixtrans}=[trans,shading = axis,rectangle, left color=ocolor, right color=icolor,shading angle=135]
\tikzstyle{oimixtrans}=[trans,shading = axis,rectangle, left color=ocolor, right color=oicolor,shading angle=135]
\tikzstyle{insc}=[scale=.8]
\tikzstyle{splitme}=[rectangle split, rectangle split horizontal,rectangle split parts=2]
\tikzstyle{log}=[fill=white]
\tikzstyle{model}=[fill=gray!10]
\newcommand{\Pplay}{P_{\tikz{\node[starttoken,minimum width=1.8mm] {};}}}
\newcommand{\Pstop}{P_{\tikz{\node[endtoken,minimum width=1.8mm] {};}}}
\begin{document}

% \title{To bind or not to bind? Replaying Object-centric Processes under Stable Relationships}
\title{To bind or not to bind? Discovering Stable Relationships in Object-centric Processes 
(Extended Version)
}
% \subtitle{Extended Version}
\titlerunning{Discovering Stable Relationships in Object-centric Processes}

\author{
Anjo Seidel\inst{1} \and
Sarah Winkler\inst{2} \and
Alessandro Gianola\inst{3} \and
Marco Montali\inst{2} \and
Mathias Weske\inst{1}
}

\authorrunning{A. Seidel et al.}

\institute{
    Hasso Plattner Institute, University of Potsdam, Germany\\
    \email{\{anjo.seidel,mathias.weske\}@hpi.de}\and
    Free University of Bozen-Bolzano, Italy
    \email{\{winkler,montali\}@inf.unibz.it} \and
    INESC-ID/Instituto Superior Técnico, Universidade de Lisboa, Portugal
    \email{alessandro.gianola@tecnico.ulisboa.pt}
}
\sloppy 
\maketitle

\begin{abstract}
Object-centric process mining investigates the intertwined behavior of multiple objects in business processes.
From object-centric event logs, object-centric Petri nets (OCPN) can be discovered to replay the behavior of processes accessing different object types.
Although they indicate how objects flow through the process and co-occur in events, OCPNs remain underspecified about the relationships of objects.
Hence, they are not able to represent synchronization, i.e. executing objects only according to their intended relationships, and fail to identify violating executions.
Existing formal modeling approaches, such as object-centric Petri nets with identifiers (OPID), represent object identities and relationships to synchronize them correctly. However, OPID discovery has not yet been studied.
This paper uses explicit data models to bridge the gap between OCPNs and formal OPIDs.
We identify the implicit assumptions of stable many-to-one relationships in object-centric event logs, which implies synchronization of related objects.
To formally underpin this observation, we combine OCPNs with explicit stable many-to-one relationships in a rigorous mapping from OCPNs to OPIDs explicitly capturing the intended stable relationships and the synchronization of related objects.
We prove that the original OCPNs and the resulting OPIDs coincide for those executions that satisfy the intended relationships.
Moreover, we provide an implementation of the mapping from OCPN to OPID under stable relationships.

\keywords{Business Process Management \and Data Modeling \and Object-centric Petri Nets \and Conformance Checking  \and Synchronization.}
\end{abstract}
\section{Introduction}
\label{sec:intro}

Process mining analyzes the behavior of information systems through event logs to gain insight into the behavior of the system~\cite{vanderaalstProcessMiningManifesto2012a}.
Process mining techniques strive between two contrasting forces when dealing with process models: capturing complex real-life processes demands \emph{expressiveness}, while performance and memory consumption of the adopted algorithms call for \emph{simplicity}.
Balancing these contrasting forces is difficult given that
within a process mining pipeline, the same model can be seen as descriptive, like a model discovered from data, or as prescriptive, e.g., when later on used for conformance checking:
while a descriptive model can favor simplicity over precision, a prescriptive one should be expressive enough to precisely capture the intended behavior.
Achieving a suitable trade-off appears particularly difficult within the novel research area of object-centric process science, where processes co-evolve multiple, interrelated objects at once~\cite{vanderaalstObjectCentricProcessMining2019}.
Executions of such processes are typically represented in
\emph{object-centric event logs} (OCELs)~\cite{ghahfarokhiOCELStandardObjectCentric2021}, where each recorded event may refer to multiple objects. 
For example, consider the following OCEL $L_1$ that records activities in a bicycle manufacturing workshop.
For a given order, a frame, a set of wheels, and a handlebar are collected before being assembled:
\begin{eqnarray*}
L_1 = [&\tup{\m{collect},\{f_1, h_1, w_1, w_2\}},
\tup{\m{assemble\_w},\{f_1, w_1, w_2\}},&\\
&
\tup{\m{assemble\_h},\{f_1, h_1\}},\tup{\m{collect},\{f_2, h_2, w_3, w_4\}},&\\
&
\tup{\m{assemble\_h},\{f_2, h_2\}}, \tup{\m{assemble\_w},\{f_2, w_3, w_4\}}
&]
\end{eqnarray*}
with $f_i$ of type \emph{frame}, $h_i$ of type \emph{handlebar}, and $w_i$ of type \emph{wheel}.  Two bicycles are ordered; the first \emph{collect} event takes the frame, handlebar, and two wheels for the first bike; the \texttt{assemble\_w} event assembles two wheels to a frame, etc.

To replay the event log and describe the behavior captured in an OCEL, an \emph{object-centric Petri net (OCPN)} can be discovered~\cite{vanderaalstDiscoveringObjectcentricPetri2020}. For instance,
the OCPN in~\autoref{example:ocpn} allows to replay the behavior of log $L_1$. In the initial marking ($\tikz{\node[starttoken]{}}$), all recorded objects in the log are placed, so that, e.g., all wheel objects $w_1, \ldots w_4$ reside initially in the \emph{wheel} place.
The transitions can fire by binding objects and processing them through the net. In OCPNs, a non-variable arc binds a single object, while a variable arc (double line) binds a set of objects.
Finally, the net accepts a given execution if all objects are placed in the final marking ($\tikz{\node[endtoken]{}}$).
\begin{example}[OCPN for bicycle manufacturing]
\label{example:ocpn}

\centering
\begin{tikzpicture}[node distance=24mm]
% Places
\node[wheelPlace] (w0) {} node[below, insc, yshift=-2mm] {$\textit{wheel}$} node[starttoken] {};
\node[wheelPlace, right of=w0, xshift=24mm] (w1) {};
\node[wheelPlace, right of=w1, xshift=24mm] (w2) {} node[endtoken] at (w2) {};
\node[framePlace, below of=w0, yshift=14mm] (f0) {} node[below, insc, yshift=-12mm] {$\textit{frame}$} node[starttoken] at (f0) {};
\node[framePlace, below of=w1, yshift=14mm] (f1) {};
\node[framePlace, below of=w2, yshift=14mm] (f2) {} node[endtoken] at (f2) {};
\node[handlePlace, below of=f0, yshift=14mm] (h0) {} node[below, insc, yshift=-22mm] {$\textit{handlebar}$} node[starttoken] at (h0) {};
\node[handlePlace, below of=f1, yshift=14mm] (h1) {};
\node[handlePlace, below of=f2, yshift=14mm] (h2) {} node[endtoken] at (h2) {};
% Transitions
\node[trans, right of=f0] (cp) {$\m{collect}$};
\node[trans, right of=w1] (aw) {$\m{assemble\_w}$};
\node[trans, right of=h1] (ah) {$\m{assemble\_h}$};
\node[silentTrans, right of=f1] (s1) {};
% Arcs
\draw[fatarc] (w0) -- node[above, insc] {} (cp);
\draw[arc] (h0) -- node[above, insc] {} (cp);
\draw[fatarc] (cp) -- node[above, insc] {} (w1);
\draw[arc] (cp) -- node[above, insc] {} (h1);
\draw[fatarc] (w1) -- node[above, insc] {} (aw);
\draw[fatarc] (aw) -- node[above, insc] {} (w2);
\draw[arc] (f0) -- node[below, insc] {} (cp);
\draw[arc] (cp) -- node[below, insc] {} (f1);
\draw[arc] (f1) -- node[above, insc] {} (aw);
\draw[arc] (h1) -- node[above, insc] {} (ah);
\draw[arc] (ah) -- node[above, insc] {} (h2);
\draw[arc] (f1) -- node[above, insc] {} (ah);
\draw[arc] (aw) -- node[above, insc] {} (f2);
\draw[arc] (ah) -- node[above, insc] {} (f2);
\draw[arc] (f2) -- node[below, insc] {} (s1);
\draw[arc] (s1) -- node[below, insc] {} (f1);
\end{tikzpicture}
\end{example}
OCPNs favor simplicity over expressiveness:
Although discovered OCPNs consider the token flow of multiple object types, OCPNs do not provide mechanisms to track the identity of objects, their relationships, and cardinality constraints.
As pointed out in~\cite{Aal23}, this deliberate underspecification makes discovered OCPNs underfitting, as they allow for too much behavior.
This becomes apparent if the expected data model with cardinality constraints for log $L_1$ is explicitly modeled as a UML class diagram:

{\centering
\resizebox{.9\textwidth}{!}{
\begin{tikzpicture}[node distance=48mm]
\node[class] (wheel) {wheel};
\node[class, right of=wheel] (frame) {frame};
\node[class, right of=frame] (handlebar) {handlebar};
\draw[] (wheel) -- node[above, insc, xshift=-10mm, scale=1.2] {1..*} node[above, insc, xshift=10mm, scale=1.2] {1..1} (frame);
\draw[] (frame) -- node[above, insc, xshift=-10mm, scale=1.2] {1..1} node[above, insc, xshift=10mm, scale=1.2] {1..1} (handlebar);
\end{tikzpicture}}\par}
\noindent
The behavior recorded in $L_1$ satisfies the cardinality constraints for a many-to-one relationship between the wheels and a frame, and a one-to-one relationship between the frame and handlebar.
However, in an OCPN it is impossible to express that a transition should bind objects in accordance with their mutual relationships, and it cannot distinguish transition firings that satisfy this criterion from those that bind unrelated objects, a distinction that is often essential when dealing with synchronization~\cite{gianolaObjectCentricConformanceAlignments2024}. For example, in the OCPN in Example~\ref{example:ocpn} wheels collected for a given frame may be reshuffled, that is, assembled with a different frame.
% Forcing the frame to be the same cannot be expressed, since OCPNs do not represent object relationships that would link a collected frame to its wheels.
Hence, it accepts the execution in the following log:
\begin{eqnarray*}
L_2 = & [\tup{\m{collect},\{f_3, h_3, w_5, w_6\}},
\tup{\m{collect},\{f_4, h_4, w_7\}},
\tup{\m{assemble\_w},\{f_3, w_5\}},\\&
\tup{\m{assemble\_h},\{f_3, h_4\}},
\tup{\m{assemble\_w},\{f_4, w_6, w_7\}},
\tup{\m{assemble\_h},\{f_4, h_3\}}
]
\end{eqnarray*}
where wheel $w_6$ is attached to frame $f_4$ even though it was collected with frame $f_3$.
This indicates that whenever an OCPN is used in a prescriptive manner for conformance checking or simulation, further so-called
``bounding behavior'' \cite{Aal23} is needed 
to ensure the desired synchronization.
This bounding behavior can be 
suitably captured through more sophisticated and expressive object-centric process modeling languages, 
such as variants of Petri nets with identifiers (PNIDs) \cite{ghilardiPetriNetbasedObjectcentric2022a,vanderwerfCorrectnessNotionsPetri2024}, or synchronous proclets \cite{fahlandDescribingBehaviorProcesses2019}.
Recently, object-centric Petri nets with identifiers (OPIDs \cite{gianolaObjectCentricConformanceAlignments2024}) have been proposed as a formalism that unifies the features of OCPNs, PNIDs, and synchronous proclets.
In contrast to OCPNs, OPIDs are appropriate for prescriptive purposes because of their high expressiveness.
However, while conformance for OPIDs is handled in \cite{gianolaObjectCentricConformanceAlignments2024}, the discovery of OPIDs and related formalisms has not yet been studied.
To avoid reshuffling like in the above example, and similar synchronization problems,
we start by identifying an important assumption that is implicit in OCELs:
executions are supposed to satisfy
\emph{stable many-to-one relationship} between two object types
which synchronize the corresponding objects throughout the execution. This means that if the relationship between object types $\sigma_m$ and $\sigma_o$ is many-to-one, then whenever an object of type $\sigma_m$ is bound to an object of type $\sigma_o$, 
it is never bound to a different object of type $\sigma_o$.
This implicit assumption comes from the fact that OCEL assumes global and rigid relationships throughout execution~\cite{knoppDiscoveringObjectCentricProcess2023,bertiOCELObjectCentricEvent2024}.

To find a model that satisfies both the simplicity of discovered OCPNs and the expressiveness to enforce the desired bounding behavior, this paper presents a rigorous mapping from an OCPN together with a set of stable many-to-one relationships, to an equivalent \emph{replay OPID} that enforces the intended synchronization.
By doing so, we at once provide a fourfold contribution:
\begin{inparaenum}[(1)]
\item we provide a first discovery technique for one specific class of OPIDs,
\item we identify the class of relatively simple replay OPIDs that are yet expressive enough to enforce the desired bounding behaviour, 
\item we formally prove that a so-obtained replay OPID is equivalent to the original OCPN for executions that respect the data model
% , confirming the observation made in practice that under the assumption of stable relationships OCPNs are ``just fine'', and 
\item 
we provide a proof-of-concept implementation of our transformation.
\end{inparaenum}

To this end, after recapping the preliminaries (\autoref{sec:preliminaries}), we first define a mapping of an OCPN to an equivalent replay OPID
that tracks object identifiers but not object relationships 
(\autoref{sec:equiv}).
In a second step, given a set of stable many-to-one relationships, we define a translation that encodes the synchronization semantics into a more restrictive replay OPID (\autoref{sec:sync}), which we prove to be equivalent to the original OCPN in replaying OCELs that satisfy the stable many-to-one relationships, while prohibiting violating executions.
To illustrate practical relevance, we discuss our contribution in the context of related work, show that stable many-to-one relationships frequently occur in existing OCEL benchmarks, and describe our implementation of the transformations (\autoref{sec:related_work}).
% An extended version of this paper that contains proofs omitted for reasons of space, implementation, and evaluation data can be found in a repository online.\footnote{\url{https://anonymous.4open.science/r/To-bind-or-not-to-bind-43DD}}

\section{Preliminary Definitions}\label{sec:preliminaries}

\subsection{Object-centric Event Logs and Object-centric Petri Nets}
To define OCELs and OCPNs, we recall notions by van der Aalst and Berti~\cite{vanderaalstDiscoveringObjectcentricPetri2020}.

\begin{definition}
Given a set of object types $\objtypes$, a set of activities $\activities$, and a set of timestamps $\mathbb{T}$ under a total order $<$, an \emph{object-centric event log} (OCEL) is a tuple $L = ( E, \objects, \pi_{act}, \pi_{obj}, \pi_{time} )$ where:
\begin{itemize}
    \item $E$ is a set of events;
    \item $\objects$ is a set of object identifiers, each typed by the function $ot: \objects \to \objtypes$;
    \item the functions $\pi_{act}: E \to \activities$, $\pi_{obj}: E \to \mathcal{P}(\objects)$, and $\pi_{time}: E \to \mathbb{T}$, mapping each event to an activity, a set of objects, and a timestamp.
\end{itemize}
\end{definition}

For example, the event logs $L_1$ and $L_2$ in \autoref{sec:intro} are defined for the object types \textit{Wheel}, \textit{Frame}, and \textit{Handlebar}. The event $e = \tup{\m{collect},\{f_1, h_1, w_1, w_2\}}$ in $L_1$ refers to the activity $\pi_{act}(e) = \texttt{collect}$ and the objects $\pi_{obj}(e) = \{f_1, h_1, w_1, w_2\}$, where $ot(f_1) = \textit{Frame}$.

\begin{definition}
\label{def:OCPN}
Given a set of object types $\Sigma$, and a set of activities $\activities$, an \emph{object-centric Petri net} (OCPN) is defined as a tuple  
$N = (P, T, F, pt, F_{var}, \ell)$,
where:
\begin{enumerate}
    \item $P$ and $T$ are finite sets of places and transitions such that $P\cap T=\emptyset$,
    \item $F \subseteq (P \times T) \cup (T \times P)$ is the flow relation between places and transitions,
    \item $pt\colon P \rightarrow \Sigma$ maps every place to an object type,
    \item $F_{var} \subseteq F$ is the set of variable flow relations,
    \item $\ell \colon T \to \activities \cup \{\tau\}$ is the transition labelling where $\tau$ marks an invisible activity.
\end{enumerate}
\end{definition}
% An example OCPN is presented in \autoref{example:ocpn}.

The pre-places of a transition $t$ are all places with ingoing flow relations $\pre t = \{p \in P \mid (p,t) \in F\}$, while the post-places of $t$ are all those connected with outgoing flow relations $\post{t} = \{p \in P \mid (t,p) \in F\}$.
For a transition $t$, the function $tpl(t)$ returns the types of all its corresponding places $p \in \pre t \cup t \post{}$. $tpl_{var}(t)$ returns all place types connected via variable flow, and $tpl_{nv}(t)$ all place types connected via non-variable flow.
An OCPN is considered well-formed if all flow relations of one transition agree in the variability for the place types, i.e., $\forall t \in T: tpl_{var}(t) \cap tpl_{nv}(t) = \emptyset$.

% Van der Aalst and Berti~\cite{vanderaalstDiscoveringObjectcentricPetri2020} define the marking of an OCPN as follows.
All possible tokens and their location in a place are defined as $Q = \{(p,o) \in P \times \objects \mid pt(p) = type(o)\}$. A marking is generally defined as a multiset of token-object pairs $M \colon Q \to \mathbb{N}$~\cite{vanderaalstDiscoveringObjectcentricPetri2020}.
However, we pose the following specific assumptions about the structure of an OCPN.
The contemporary discovery algorithms for OCPN~\cite{vanderaalstDiscoveringObjectcentricPetri2020,vandettenDiscoveringCompactLive2024} utilize the Inductive Miner
algorithm~\cite{leemansScalableProcessDiscovery2018}, which results in sound workflow nets for each object type.
As a result, all places are bound to at most one object identity by design.
Therefore, we interpret the marking $M$ of an OCPN as a set of objects in a marking such that $M\colon Q \to \{0,1\}$.
Furthermore, each object type has one initial place and one final place.

The set of binding executions is $B = \{ (t, b) \in T \times (\objtypes \to \mathcal{P}(\objects)) \mid dom(b) = tpl(t) \wedge \forall \sigma \in tpl_{nv}(t): |b(\sigma)| = 1 \}$. It contains pairs of transition $t$ and binding $b$. A binding $b$ is a function that maps an object type $\sigma \in \objtypes$ to a set of objects.
We denote by $\codom(b)$ the codomain of $b$, that is, $\codom(b) = \bigcup_{\sigma \in \objtypes} b(\sigma)$.
This binding execution consumes a set of object tokens from the transition's pre-places $cons(t,b) = \{(p,o) \in Q \mid p \in \pre t \wedge o \in b(pt(p))\}$. Similarly, it produces object tokens $prod(t,b) = \{(p,o) \in Q \mid p \in {\post t} \wedge o \in b(pt(p))\}$.
% \sarahtodo{multiset notation? I thought we restrict markings to sets instead of multisets}
This binding is enabled in a marking $M_i$ and can be executed if $cons(t,b) \leq M_i$.
The resulting marking is $M_j = M_i - cons(t,b) + prod(t,b)$. The firing is denoted as $\smash{M_i \goto{(t,b)} M_j}$.
Note that by the definition of bindings, if a transition $t$ has two places in $p_1,p_2 \in \pre t$ of the same object type, then $t$ can only be fired by consuming the same objects from $p_1$ and $p_2$. 
A binding $(t, b)$ for a transition $t$ considers only objects $b \subseteq \objects$ that belong to the types of the respective input places and are in line with the variable and non-variable arcs.

An \emph{accepting OCPN} combines an OCPN $N$ with an initial marking $\Minit \in \mathcal{B}(Q)$ and a final marking $\Mfinal \in \mathcal{B}(Q)$.
An accepted run is a sequence of binding executions from initial to final marking $\Minit \goto{*} \Mfinal$.
The accepting OCPN can be used to \emph{replay} the behavior of the log. All recorded objects are interpreted as the initial marking $\Minit$. An event in the log and its recorded objects are interpreted as a binding execution of the corresponding transition and the involved objects, which results in a new marking. After replaying all events consecutively, the OCPN can accept the log if the run results in the final marking $\Mfinal$. If an event has no corresponding binding execution or $\Mfinal$ is not reached, the log is rejected by the OCPN.

\subsection{Object-centric Petri Nets with Identifiers}
\begin{example}[Simple OPID for bike manufacturing] The OPID combines the object types \textit{Wheel} (red), \textit{Frame} (yellow), and relations for both (orange places).
\label{example:opid}

\centering
\begin{tikzpicture}[node distance=24mm]
% Places
\node[silentTrans] (ew) {};
\node[wheelPlace, right of=ew, xshift=-12mm] (w0) {} node[below of=w0, yshift=20mm, insc] {$\textit{Wheel}$};
\node[wheelFramePlace, right of=w0, xshift=24mm] (w1) {}node[below of=w1, yshift=20mm, insc] {$\textit{Wheel}\times \textit{Frame}$};
\node[wheelFramePlace, right of=w1, xshift=24mm] (w2) {};
\node[silentTrans, right of=w2, xshift=-8mm] (cw) {};
\node[silentTrans, below of=ew, yshift=8mm] (ef) {};
\node[framePlace, below of=w0, yshift=8mm] (f0) {} node[below of=f0, yshift=20mm, insc] {$\textit{Frame}$};
\node[framePlace, below of=w1, yshift=8mm] (f1) {};
\node[framePlace, below of=w2, yshift=8mm] (f2) {};
\node[silentTrans, below of=cw, yshift=8mm] (cf) {};
% Transitions
\node[trans, below of=w0,yshift=17mm,xshift=24mm] (rec) {$\m{collect}$};
\node[trans, below of=w1,yshift=17mm,xshift=24mm] (aw) {$\m{assemble}$};
% Arcs
\draw[arc] (ew) -- node[below, insc] {$\nu_w$} (w0);
\draw[arc] (ef) -- node[below, insc] {$\nu_f$} (f0);
\draw[arc] (w2) -- node[below, insc] {$\tup{w,f}$} (cw);
\draw[arc] (f2) -- node[below, insc] {$f$} (cf);
\draw[fatarc] (w0) -- node[above, insc] {$W$} (rec);
\draw[fatarc] (rec) -- node[above, insc] {$\tup{W,f}$} (w1);
\draw[fatarc] (w1) -- node[above, insc] {$\tup{W,f}$} (aw);
\draw[fatarc] (aw) -- node[above, insc] {$\tup{W,f}$} (w2);
\draw[arc] (f0) -- node[below, insc] {$f$} (rec);
\draw[arc] (rec) -- node[below, insc] {$f$} (f1);
\draw[arc] (f1) -- node[below, insc] {$f$} (aw);
\draw[arc] (aw) -- node[below, insc] {$f$} (f2);
\end{tikzpicture}
\end{example}

We summarize the definitions from~\cite{gianolaObjectCentricConformanceAlignments2024}.
Every object type $\sigma \in \objtypes$ is assumed to have a domain $\dom(\sigma) \subseteq \objects$,
given by all objects in $\objects$ of type $\sigma$. 
List types with a base type in $\sigma$ are denoted as $\listtype{\sigma}$.
Each Petri net place has a \emph{color} that is a cartesian product of data types from $\objtypes$; precisely, the set of colors $\colors$ is the set of all $\sigma_1 \times \cdots \times \sigma_m$ such that $m \geq 1$ and $\sigma_i \in \objtypes$ for all $1\leq i\leq m$.
We fix a set of $\objtypes$-typed variables $\allvars = \varset \uplus \varset_{list} \uplus \nuvarset$ as the disjoint union of 
a set $\varset$ of ``normal'' variables that refer to single objects, denoted by lower-case letters like $x$, with a type $\vartype(x) \in \Sigma$,
a set $\varset_{list}$ of list variables that refer to a list of objects of the same type, denoted by upper case letters like $U$, with a type $\vartype(U)=\listtype{\sigma}$ for some $\sigma \in \Sigma$,
and 
 a set $\nuvarset$ of variables referring to fresh objects, denoted as $\nu$, 
 with $\vartype(\nu) \in \Sigma$. 

In OPIDs~\cite{gianolaObjectCentricConformanceAlignments2024}, tokens are object tuples associated with a color.
To define relationships between objects in consumed and produced tokens when firing a transition, arc \emph{inscriptions} are defined.

\begin{definition}
\label{def:inscription}
An \emph{inscription} is a tuple $\vec x = \tup{x_1, \dots, x_m}$
such that $m \geq 1$ and $x_i\in \allvars$ for all $i$, but at most one $x_i \in \listvarset$, for $1 \leq i \leq m$.
We call $\vec x$ a 
\emph{template inscription} if $x_i\in \listvarset$ for some $i$, and a
\emph{simple inscription} otherwise.
\end{definition}

Template inscriptions capture an arbitrary number of tokens of the same color; they are used to label the correspondent of variable arcs in OCPNs.
The color of an inscription $\iota =\tup{x_1, \dots, x_m}$ is given by $\coloring(\iota) =\tup{\sigma_1, \dots, \sigma_m}$ where $\sigma_i=\vartype(x_i)$ if $x_i\in \varset\cup \nuvarset$, and $\sigma_i=\sigma'$ if $x_i$ is a list variable of type $\listtype{\sigma'}$.
We set  $\vars{\iota} = \{x_1, \dots, x_m\}$.
The set of all inscriptions is denoted $\Omega$.
Based on this, an OPID is formally defined as follows~\cite{gianolaObjectCentricConformanceAlignments2024}.

\begin{definition}
\label{def:OCInet}
An \emph{object-centric Petri net with identifiers} (OPID) %$\values$-typed
is defined as a tuple  
$N = (\objtypes, P, T, \inflow, \outflow, \coloring,\ell)$,
where:
\begin{enumerate}
\item $P$ and $T$ are finite sets of places and transitions such that $P\cap T=\emptyset$,
\item $\coloring\colon P \rightarrow \colors$ maps every place to a color,
\item $\ell \colon T \to \activities \cup \{\tau\}$ is the transition labelling where $\tau$ marks an invisible activity,
\item $\inflow \colon P \times T \rightarrow \Omega$ is a partial function called \emph{input flow},  such that\\
$\coloring(\inflow(p,t))=\coloring(p)$
for every $(p,t)\in \dom(\inflow)$,
\item $\outflow\colon T \times P \rightarrow \Omega$ is a partial function called \emph{output flow}, such that\\
$\coloring(\outflow(t,p))=\coloring(p)$ for every $(t,p)\in \dom(\outflow)$.
\end{enumerate}
We set $\invars{t} = \bigcup_{p \in P} \vars{\inflow(p,t)}$, and
$\outvars{t} = \bigcup_{p \in P} \vars{\outflow(t,p)}$, and require that
$\invars{t}\,{\cap}\,\nuvarset\,{=}\,\emptyset$ and 
$\outvars{t} \subseteq \invars{t}\,{\cup}\,\nuvarset$, for all $t\,{\in}\,T$.
\end{definition}

If $\inflow(p, t)$ is defined, it
is called a \emph{variable} flow if $\inflow(p,t)$ is a template inscription,
and \emph{non-variable} flow otherwise; and similar for output flows.
Variable flows play the role of variable arcs in OCPNs~\cite{vanderaalstDiscoveringObjectcentricPetri2020}; they can carry multiple tokens at once.
The common notations $\pre t = \{p \mid (p,t)\in \dom(\inflow)\}$ and $\post t = \{p \mid (t,p)\in \dom(\outflow)\}$ are used as well.

\textbf{Semantics.}
Given the set of objects $\objects$,
the set of \emph{tokens} $\tokens$ is the set of object tuples $\tokens\,{=}\,\{\objects^m\,{\mid}\,m\,{\geq}\,1\}$.
The \emph{color} of a token $\omega\in \tokens$ of the form $\omega=\tup{o_1, \dots, o_m}$ is denoted
$\coloring(\omega)=\tup{\vartype(o_1), \dots, \vartype(o_m)}$.
A \emph{marking} of an OPID $N=\tup{\objtypes, P, T, \inflow, \outflow, \coloring,\ell}$ is a function $M\colon P\rightarrow 2^{\tokens}$, such that for all $p\in P$ and $\tup{o_1,\dots o_m} \in M(p)$, it holds that $\coloring(\tup{o_1,\dots o_m}) = \coloring(p)$. 
Let $\mathit{Lists}(\objects)$ denote the set of objects lists of the form $[o_1, \dots, o_k]$ with $o_1,\dots, o_k \in \objects$ such that all $o_i$ have the same type;
the type of such a list is then $\listtype{\vartype(o_1)}$.
Next, \emph{bindings} are defined to fix which objects are involved in a transition firing.

\begin{definition}
A \emph{binding} for a transition $t$ and a marking $M$ is a type-preserving function 
$\beta\colon \invars{t} \cup \outvars{t} \to \objects \cup \mathit{Lists}(\objects)$.
To ensure the freshness of created values, $\beta$ is required to be injective on $\nuvarset \cap \outvars{t}$, and $\beta(\nu)$ must not occur in $M$ for all $\nu \in \nuvarset \cap \outvars{t}$.
\end{definition}

We denote by $\codom(\beta) \subseteq \objects$ the set of all objects occurring in tuples returned by $\beta$.
Bindings are extended to inscriptions, denoted $\vec \beta$, to fix which tokens participate in a transition firing.
For an inscription $\iota\,{=}\,\tup{x_1, \dots, x_m}$, let $o_i\,{=}\,\beta(x_i)$ for all $1\,{\leq}\,i\,{\leq}\,m$.
Then $\vec \beta(\iota)$ is the set of object tuples defined as follows:
if $\iota$ is a simple inscription, then
$\vec \beta(\iota) = \{\tup{o_1, \dots, o_m}\}$.
Otherwise, there must be one $x_i$, $1\,{\leq}\,i\,{\leq}\,n$, such that $x_i \in \listvarset$, and consequently $o_i$ must be a list, say $o_i=[u_1, \dots, u_k]$ for some $u_1, \dots, u_k$.
Then $\vec \beta(\iota) = \{\tup{o_1, \dots, o_{i-1}, u_1, o_{i+1},\dots, o_m}, \dots, \tup{o_1, \dots, o_{i-1}, u_k, o_{i+1},\dots, o_m}\}$.
$\mathcal B$ denotes the set of all bindings.
Finally, a transition with a binding $\beta$ is enabled if all object tuples pointed by $\vec \beta$ occur in the current marking:

\begin{definition}
\label{def:enabled}
A transition $t \in T$ and a binding $\beta$ for marking $M$ are \emph{enabled} in $M$
if $\vec \beta(\inflow(p,t)) \subseteq M(p)$ for all $p \in \pre{t}$.
\end{definition}

%At this point, firing can be defined:

\begin{definition}
Let transition $t$ be enabled in marking $M$ with binding $\beta$.
The \emph{firing} of $t$ yields the new marking $M'$ given by $M'(p)=M(p) \setminus \vec \beta(\inflow(p,t)) $ for all $p \in \pre t\setminus \post t$, and
$M'(p)=M(p) \cup \vec \beta(\outflow(p,t))$ for all $p \in {\post t}\setminus {\pre t}$, and $M'(p)=M(p) \setminus \vec \beta(\inflow(p,t)) \cup \vec \beta(\outflow(p,t))$ for all $p \in \pre t\cap \post t$.%
\footnote{The case for $p \in \pre t\cap \post t$ was omitted in~\cite{gianolaObjectCentricConformanceAlignments2024}, which is insufficient for cyclic nets.}
\end{definition}

We write $M \goto{t,\beta} M'$ to denote that $t$ is enabled for binding $\beta$ in $M$, and its firing yields $M'$.
An \emph{accepting} OPID $N$ is an OPID 
together with a set of initial markings $M_{\mathit{init}}$ and a set of final markings $M_{\mathit{final}}$.
A run is accepted if it starts in the initial and ends in the final marking $\Minit \goto{*} \Mfinal$.

\section{Equivalent Object-centric Petri Nets with Identifiers}
\label{sec:equiv}

This section defines the mapping $\directmap$ from an OCPN $N$ to an OPID. We show that the replays of both nets $N$ and $\directmap(N)$ on a given event log are equivalent, that is, they accept and reject the same executions.

In this paper, we assume that for any given OCPN
$N = (P,T,F,pt,F_{var},\ell)$
there are fixed sets of places $\Pplay, \Pstop \subseteq P$ called \emph{play} and \emph{stop places}, 
respectively, such that for each $\sigma\in\Sigma$ there is exactly one place $p\in \Pplay$ and $p' \in \Pstop$ such that $pt(p)=pt(p')=\sigma$.
These are the places that are supposed to be populated by initial and final markings, respectively.
Markings $\Minit$ and $\Mfinal$ are \emph{suitable} initial and final markings for an OCPN with play and stop places $\Pplay, \Pstop$
if they put tokens only in those places, i.e., for all $o\in\objects$, $\Minit(p,o) = 0$ for all $p\not\in\Pplay$ and $\Minit(p',o) = 0$ for all $p'\not\in\Pstop$, and similar for $\Mfinal$.

Note that a given set of objects uniquely determines suitable initial and final markings for $N$:
precisely,
for a finite set of objects $O \subseteq \objects$, let $\Minit^O$ (resp. $\Mfinal^O$) be the marking such that
for each $\sigma\in\Sigma$ and and each $o\in O_\sigma$, $\Minit^O(p_\sigma,o)=1$ (resp. $\Mfinal^O(p_\sigma,o)=1$) where $p_\sigma$ is the unique place in $\Pplay$ (resp. $\Pstop$) of type $\sigma$.
Our mapping $\directmap$ maps an OCPN $N$ with start and stop places to an OPID $\directmap(N)$
that keeps all the places and transitions of $N$. The flow relations are adjusted and extended with arc inscriptions. Moreover, $\directmap(N)$ has additional transitions for creating and removing objects.
In this way, the OPID can, on demand, populate the places in $\Pplay$ with objects in an initial marking for $N$ and consume objects from the places in $\Pstop$ once a final marking for $N$ is reached.
The initial and final markings for $\directmap(N)$ are always empty.

Precisely, the OPID has an emitting transition for every object type, i.e., we set $T_e := \{ t_{e,\sigma} \mid \sigma \in \objtypes \}$; and a consuming transition for every object type,
i.e., we set $T_c := \{ t_{c,\sigma} \mid \sigma \in \objtypes \}$.
With its emitting transitions $T_e$, an OPID allows the creation of arbitrary new objects. However, in the replay of an OCEL $L$, this set of objects is fixed and denoted as $\objects_L$. 
Hence, when replaying the log $L$, 
the inscription variables $\nu_i \in \outflow(T_e)$ will be bound exclusively to objects in $\objects_L$.
To be able to define suitable inscriptions, we assume that for every object type $\sigma \in \Sigma$ there is a normal variable $x_\sigma \in \varset$, a list variable $X_\sigma \in \listvarset$, and a fresh object variable $\nu_\sigma \in \nuvarset$. We then set:
\begin{xalignat*}{2}
\outflow&\colon (T \cup T_e) \times P \to \Omega &
\outflow&(t,p) = 
    \begin{cases}
\tup{x_{pt(p)}}
    & \text{if } (t,p) \in F \setminus F_{var} \\
\tup{X_{pt(p)}}
    & \text{if } (t,p) \in F_{var} \\
\tup{\nu_{pt(p)}}
    & \text{if } p \in \Pplay \wedge t \in T_e %\wedge type(\nu) = pt(p)
  \end{cases}
\\
\inflow&\colon P \times (T \cup T_c) \to \Omega &
\inflow&(p,t) = 
  \begin{cases}
\tup{x_{pt(p)}}
    & \text{if } (p,t) \in F \setminus F_{var} \\
\tup{X_{pt(p)}}
    & \text{if } (p,t) \in F_{var} \\
    \tup{x_{pt(p)}}, & \text{if } p \in \Pstop \wedge t \in T_c
  \end{cases}
\end{xalignat*}
The equivalent OPID is given by $\directmap(N) = (\Sigma_1, P_1, T_1, F_{in}, F_{out}, \textit{color}, \ell_1)$ where $\Sigma_1=\Sigma$, $P_1=P$, $T_1=T \cup T_e \cup T_c$, $\textit{color}(p) = \tup{pt(p)}$ for all $p\in P$, and

\begin{xalignat*}{2}
\ell_1 & \colon T \to \activities &
\ell_1&(t) = 
  \begin{cases}
    \tau, & \text{if } t \in T_e \cup T_c \\
    \ell(t), & \text{otherwise}
  \end{cases}
\end{xalignat*}

\begin{example}[Equivalent OPID for bike manufacturing]
The following OPID is obtained from the OCPN in ~\autoref{example:ocpn} by applying the translation $\directmap$.

\centering
\begin{tikzpicture}[node distance=24mm]
% Places
\node[silentTrans] (ew) {};
\node[wheelPlace, right of=ew, xshift=-12mm] (w0) {} node[pseudostart] at (w0) {};
\node[wheelPlace, right of=w0, xshift=24mm] (w1) {};
\node[wheelPlace, right of=w1, xshift=24mm] (w2) {} node[pseudoend] at (w2) {};
\node[silentTrans, right of=w2, xshift=-12mm] (cw) {};
\node[silentTrans, below of=ew, yshift=14mm] (ef) {};
\node[framePlace, below of=w0, yshift=14mm] (f0) {} node[pseudostart] at (f0) {};
\node[framePlace, below of=w1, yshift=14mm] (f1) {};
\node[framePlace, below of=w2, yshift=14mm] (f2) {} node[pseudoend] at (f2) {};
\node[silentTrans, below of=cw, yshift=14mm] (cf) {};
\node[silentTrans, below of=ef, yshift=14mm] (eh) {};
\node[handlePlace, below of=f0, yshift=14mm] (h0) {} node[pseudostart] at (h0) {};
\node[handlePlace, below of=f1, yshift=14mm] (h1) {};
\node[handlePlace, below of=f2, yshift=14mm] (h2) {} {} node[pseudoend] at (h2) {};
\node[silentTrans, below of=cf, yshift=14mm] (ch) {};
% Transitions
\node[trans, right of=f0] (cp) {$\m{collect}$};
\node[trans, right of=w1] (aw) {$\m{assemble\_w}$};
\node[silentTrans, right of=f1] (s1) {};
\node[trans, right of=h1] (ah) {$\m{assemble\_h}$};
% Arcs: Emitter
\draw[arc] (ew) -- node[below, insc] {$\nu_w$} (w0);
\draw[arc] (ef) -- node[below, insc] {$\nu_f$} (f0);
\draw[arc] (eh) -- node[below, insc] {$\nu_h$} (h0);
% Arcs: Consumer
\draw[arc] (w2) -- node[below, insc] {$w$} (cw);
\draw[arc] (f2) -- node[below, insc] {$f$} (cf);
\draw[arc] (h2) -- node[below, insc] {$h$} (ch);
% Arcs: Wheels
\draw[fatarc] (w0) -- node[above, insc] {$W$} (cp);
\draw[fatarc] (cp) -- node[above, insc] {$W$} (w1);
\draw[fatarc] (w1) -- node[above, insc] {$W$} (aw);
\draw[fatarc] (aw) -- node[above, insc] {$W$} (w2);
% Arcs: Frame
\draw[arc] (f0) -- node[below, insc] {$f$} (cp);
\draw[arc] (cp) -- node[below, insc] {$f$} (f1);
\draw[arc] (f1) -- node[above, insc] {$f$} (aw);
\draw[arc] (aw) -- node[above, insc] {$f$} (f2);
\draw[arc] (f1) -- node[below, insc] {$f$} (ah);
\draw[arc] (ah) -- node[below, insc] {$f$} (f2);
\draw[arc] (f2) -- node[below, insc] {$f$} (s1);
\draw[arc] (s1) -- node[below, insc] {$f$} (f1);
% Arcs: Handle
\draw[arc] (h0) -- node[below, insc] {$h$} (cp);
\draw[arc] (cp) -- node[below, insc] {$h$} (h1);
\draw[arc] (h1) -- node[above, insc] {$h$} (ah);
\draw[arc] (ah) -- node[above, insc] {$h$} (h2);
\end{tikzpicture}\par
\end{example}

$\directmap(N)$ is interpreted as an accepting OPID where the initial and final markings are defined as the empty marking $\Mempty$, which requires that all places be empty.
The example OPID can replay the logs $L_1$ as well as $L_2$ as it has equivalent semantics to \autoref{example:ocpn} in combining objects without additional constraints.

As formalized in \autoref{theorem:equal:1}, the execution semantics of the OCPN $N$ and the mapped OPID $\directmap(N)$ are equivalent for replaying any OCEL. This theorem can be proven by considering the equivalent definition of binding executions. Considering $M_i'=\directmap(M_i)$ as the mapped markings and $\beta_i = \directmap(b_i)$ as the mapped bindings, $\directmap(N)$ can produce and consume the same set of objects $\objects$ as specified in the initial and final marking of the OCPN.
% Use for extended paper:
For reasons of space, the full proof of \autoref{theorem:equal:1} is moved to Appendix~\ref{app:a}. %here
% Use for conference paper:
% For reasons of space, the full proof of \autoref{theorem:equal:1} is moved to Appendix A of the extended paper~\cite{extended_paper}.
% \anjotodo{Reference extended paper}
\begin{theorem}
\label{theorem:equal:1}
Let $\objects$ be a set of objects, $N$ be an accepting OCPN, $N' = \directmap(N)$, and $\Mempty$ denote the empty marking of $N'$.
There exists a run of $N$:
\begin{equation*}
M_0 \goto{(t_1,b_1)} M_1 \goto{(t_2,b_2)} \dots \goto{(t_n,b_n)} M_n
\end{equation*}
with $M_0{=}\Minit^\objects$ and $M_n{=}\Mfinal^\objects$
if and only if there is a run of $N'$ over the set of objects $\objects$:
\begin{equation*}
\Mempty \goto{*} M_0' \goto{(t_1,\beta_1)} M_1' \goto{(t_2,\beta_2)} \dots \goto{(t_n,\beta_n)} M_n' \goto{*} \Mempty
\end{equation*}
such that $M_i'=\directmap(M_i)$ for all $0 \leq i \leq n$ and $\beta_i = \directmap(b_i)$ for all $1\leq i \leq n$,
such that $\Mempty \goto{*} M_0'$ is an emitting sequence, $M_n' \goto{*} \Mempty$ a collapsing sequence, and no emitting or consuming transitions occur in between.
\end{theorem}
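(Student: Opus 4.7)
The plan is to prove both directions by induction on the length $n$ of the core run, after first disposing of the emitting and collapsing phases that bracket the OPID run so that the induction only needs to deal with transitions in the original $T$.

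First, I would formalize $\directmap$ on markings and bindings. For an OCPN marking $M$, define $\directmap(M)$ by $\directmap(M)(p) = \{\tup{o} \mid M(p,o)=1\}$, which respects colors since every place in $\directmap(N)$ has the singleton color $\tup{pt(p)}$. For an OCPN firing $(t,b)$ with $t \in T$, define $\directmap(b) = \beta$ by setting $\beta(x_\sigma) = o$ whenever $\sigma \in tpl_{nv}(t)$ and $b(\sigma)=\{o\}$, and $\beta(X_\sigma) = [o_1, \dots, o_k]$ whenever $\sigma \in tpl_{var}(t)$ and $b(\sigma)=\{o_1, \dots, o_k\}$ under a fixed enumeration. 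Under this construction, $\directmap(\Minit^\objects)$ is exactly the marking reached from $\Mempty$ by firing, for each $o \in \objects$ of type $\sigma$, the emitting transition $t_{e,\sigma}$ with $\nu_\sigma$ bound to $o$; dually, $\directmap(\Mfinal^\objects)$ is drained to $\Mempty$ by a sequence of consuming transitions. This disposes of the bracketing sequences in both directions and reduces the theorem to a step-by-step correspondence in the middle.

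The inductive core is the claim that, for $t \in T$, $M \goto{(t,b)} M'$ in $N$ if and only if $\directmap(M) \goto{(t,\directmap(b))} \directmap(M')$ in $\directmap(N)$. The forward direction unfolds definitions: $(t,b)$ is enabled in $N$ iff $cons(t,b) \leq M$, which translates directly to $\vec \beta(\inflow(p,t)) \subseteq \directmap(M)(p)$ for each $p \in \pre t$, since a non-variable arc $(p,t)$ carries $\tup{x_{pt(p)}}$ with $\beta$ mapping it to the unique object of $b(pt(p))$, and a variable arc carries $\tup{X_{pt(p)}}$, with $\vec \beta$ expanding the list into the required set of singleton tuples. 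The effect on the marking then mirrors $M' = M - cons(t,b) + prod(t,b)$ via the three-case OPID firing rule, where the $p \in \pre t \cap \post t$ clause (highlighted in the footnote) is exactly what is needed to handle the silent loop transition appearing in the running example. The converse direction follows because $\directmap$ is injective on markings and on bindings restricted to the inscription variables of $t$, so any OPID firing of a transition in $T$ has a unique OCPN preimage.

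The main obstacle I expect is the variable-arc case: an OCPN binding records $b(\sigma)$ as a \emph{set}, whereas an OPID binding must choose a \emph{list} as value for $X_\sigma$, and distinct enumerations must all give rise to the same behaviour. I would address this by observing that $\vec \beta(\tup{X_\sigma})$ depends only on the underlying set of list elements, so any enumeration induces the same consumed and produced multisets of tokens; hence $\directmap$ is well-defined up to this harmless choice. A secondary subtle point is the freshness condition on $\nuvarset$ during the emitting phase: because the initial marking of $\directmap(N)$ is $\Mempty$ and every object in $\objects$ appears only once, the injectivity of the emitting bindings on $\nuvarset$ is preserved as long as each object is emitted exactly once, in any order. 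With these book-keeping details settled, the induction proceeds routinely.
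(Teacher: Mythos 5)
Your proposal is correct and follows essentially the same route as the paper's proof: the same definition of $\directmap$ on markings and bindings, the same treatment of the emitting and collapsing phases via $T_e$ and $T_c$, the same step-correspondence lemma proved by unfolding $cons/prod$ against $\vec\beta(\inflow)/\vec\beta(\outflow)$, and the same injectivity argument for the converse direction. Your explicit remarks on the set-versus-list choice for variable arcs and on $\nu$-freshness during emission are handled only implicitly in the paper, but they do not change the argument.
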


\section{Synchronization for Stable Many-to-One Relationships}
\label{sec:sync}

In this section, we define what constitutes a stable many-to-one relationship for object-centric process executions.
Given an OCEL that assumes the stability of relationships, such a stable many-to-one relationship implies that objects on the many-side will always be bound to the same object of the one-side. Hence, the processing of such objects is implicitly synchronized.
We then extend the mapping $\directmap$ from \autoref{sec:equiv} to implement the expected synchronization for a set of defined stable many-to-one relationships.
Finally, we show that any execution replay of the OCPN is accepted by the mapped OPID if and only if it conforms to the stable many-to-one relationships.

In the following, let a pair $(\sigma_m,\sigma_o)$ be an \emph{object type relationship} if $\sigma_m \neq \sigma_o$.
For example, for the bike manufacturing example, a stable many-to-one relationship may exist for \textit{(Wheel, Frame)}.
Given an OCEL $L=(E,\objects,\pi_{act},\pi_{obj},\pi_{time})$, we define the function $lo \colon \objects \times \objtypes \to \objects$, which, given an object and an object type, returns all co-occurring objects of this type, i.e. $lo(o, \sigma) = \bigcup_{e \in E} \{ o' \mid o, o' \in \pi_{obj}(e) \wedge ot(o') = \sigma\}$.
For OCPNs and OPIDs, $lo$ is defined accordingly over a run of the net and bindings.

\begin{definition}
\label{def:conform}
An object type relationship $(\sigma_m, \sigma_o)$ is a \emph{stable many-to-one relationship} in an event log $L$ if for all objects $o_m, o_o \in \objects$, where $ot(o_m) = \sigma_m$ and $ot(o_o) = \sigma_o$ it holds that $|lo(o_m, \sigma_o)| = 1$.
\end{definition}

Similarly for OCPNs and OPIDs, $(\sigma_m, \sigma_o)$ is a \emph{stable many-to-one relationship} if the same holds for a run in the net:
The objects $o_m$ of the many-side $\sigma_m$ in the stable relationship $(\sigma_m, \sigma_o)$ are required to be linked to exactly one object $o_o$ of the one-side $\sigma_o$. On the other hand, every $o_o$ must have a non-empty set of linked objects $o_m$.
For example, \textit{(Wheel, Frame)} is a stable many-to-one relationship in the log $L_1$ from \autoref{sec:intro}, but not in $L_2$.
Note that the one-to-one relationship between handlebar and frame manifests itself as two stable many-to-one relationships \textit{(Handlebar, Frame)} and \textit{(Handlebar, Frame)}.

\subsection{Mapping to Synchronizing OPIDs}
\label{sec:mapping-rigid}
The goal of this mapping is to enforce a set of stable many-to-one relationships with synchronization behavior.
To this end, for every many-to-one relationship $(\sigma_m,\sigma_o)$, the required set of links $(m,o) \in \objects^2$ of type $(\sigma_m,\sigma_o)$ must be determined \emph{before} the actual execution starts, and must be stable during the whole execution.
In general, the idea is to store all links of a relationship in a dedicated place that is accessed by all transitions that operate on both object types.
For the Petri net used for the manufacturing of bikes, the translation results in the OPID shown in \autoref{example:opid:sync}. For the log $L_1$, it determines the links between wheels $w_1$, $w_2$ of type $W$ and frame $f_1$ of type $F$ as $(w_1,f_1),(w_2,f_1)$ before execution starts and they are not changed.
The transition's synchronization semantics ensure that only objects in a stable relationship are within one binding.

We assume a given OCPN $N = (P,T,F,pt,F_{var},\ell)$ and a fixed set of object type relationships $R$, which are supposed to work as stable many-to-one relationships. 
The mapping $\linkmap$ is defined on top of the mapping $\directmap$ in~\autoref{sec:equiv}. Let $N_1 = \directmap(N) = (\Sigma_1, P_1, T_1, F_{in,1}, F_{out,1}, \coloring_1, \ell_1)$.
In $\linkmap(N_1)$, we add the following new places and transitions:
\begin{compactitem}
\item a set of \emph{link places} $P_L := \{ p_{(\sigma_m,\sigma_o)} \mid (\sigma_m,\sigma_o) \in R \}$, storing the links for each relationship $(\sigma_m,\sigma_o) \in R$, such that $\coloring_2(p_{(\sigma_m,\sigma_o)}) = \tup{\sigma_m,\sigma_o}$,
\item a set of \emph{link transitions} $T_L := \{ \tlink{\sigma_m, \sigma_o} \mid (\sigma_m, \sigma_o) \in R \}$ that creates links for a set of objects of type $\sigma_m$ and one object of $\sigma_o$ to populate $P_L$,
\item new emitting transitions $T_i := \{ t_{i,\sigma} \mid \exists \sigma'\,{\in}\,\objtypes: (\sigma, \sigma')\,{\in}\,R \vee (\sigma', \sigma)\,{\in}\,R\}$,
\item a set $P_b := \{ p_{\sigma,(\sigma_m,\sigma_o)} \mid (\sigma = \sigma_m \vee \sigma = \sigma_o) \wedge (\sigma_m, \sigma_o) \in R \}$  of auxiliary places before the link creation such that $\coloring_2(p_{\sigma,(\sigma_m,\sigma_o)}) = \tup{\sigma}$, and 
\item a set
$P_a := \{ p_{(\sigma_m,\sigma_o),\sigma} \mid (\sigma = \sigma_m \vee \sigma = \sigma_o) \wedge (\sigma_m, \sigma_o) \in R \}$ of auxiliary places after link creation such that $\coloring_2(p_{\sigma,(\sigma_m,\sigma_o)}) = \tup{\sigma}$.
\end{compactitem}
The arrangement of the additional places and transitions is schematically shown below for a pair of object types $(\sigma_m,\sigma_o) \in R$. Here, the blue part of the net represents $N_1$, while the black parts with red labels are extensions by $\linkmap$.
%\begin{example}[Link creation for the relationship $(\sigma_m,\sigma_o) \in R$]\label{example:link:creation}

{\centering
\begin{tikzpicture}[node distance=22mm]
\tikzstyle{original}=[blue!70!black]
% Link creation Places & Transitions
\node[silentTrans] (im) {};
\node[silentTrans, below of=im, yshift=12mm] (io) {};
\node[place, right of=im,xshift=-4mm] (mb) {};
 \node[yshift=4mm, insc] at (mb) {$p_{\sigma_m,(\sigma_m,\sigma_o)}$};
\node[place, right of=io,xshift=-4mm] (ob) {};
\node[yshift=-4mm, insc] at (ob) {$p_{\sigma_o,(\sigma_m,\sigma_o)}$};
\node[silentTrans, right of=ob, yshift=7mm,xshift=4mm] (cmo) {};
\node[yshift=-5mm, insc] at (cmo) {$t_{\sigma_m,\sigma_o}$};
\node[place, right of=mb, xshift=26mm] (ma) {};
\node[yshift=4mm, insc] at (ma) {$p_{(\sigma_m,\sigma_o), \sigma_m}$};
\node[place, right of=ob, xshift=26mm] (oa) {};
\node[yshift=-4mm, insc] at (oa) {$p_{(\sigma_m,\sigma_o), \sigma_o}$};
% Places
\node[silentTrans, right of=ma,original] (em) {};
\node[place, right of=em,xshift=-8mm,original] (m0) {} node[pseudostart,original] at (m0) {};
\node[silentTrans, below of=em, yshift=12mm,original] (eo) {};
\node[place, below of=m0, yshift=12mm,original] (o0) {} node[pseudostart,original] at (o0) {};
\node[place, right of=m0,xshift=10mm,original] (mf) {} node[pseudoend,original] at (mf) {};
\node[place, below of=mf, yshift=12mm,original] (of) {} node[pseudoend,original] at (of) {};
\node[silentTrans, right of=mf, xshift=-10mm,original] (mc) {};
\node[silentTrans, right of=of, xshift=-10mm,original] (oc) {};
% Link Places
\node[place, above of=m0, yshift=-14mm, xshift=17mm] (lmo) {}; 
\node[yshift=4mm, insc,xshift=4mm] at (lmo) {$p_{\sigma_m,\sigma_o}$};
% Arcs
\draw[arc] (im) -- node[above, insc, near start] {$\nu_{\sigma_m}$} (mb);
\draw[arc] (io) -- node[above, insc, near start] {$\nu_{\sigma_o}$} (ob);
\draw[fatarc] (mb) -- node[above, insc] {$X_{\sigma_m}$} (cmo);
\draw[arc] (ob) -- node[below, insc] {$x_{\sigma_o}$} (cmo);
\draw[fatarc] (cmo) -- node[above, insc] {$X_{\sigma_m}$} (ma);
\draw[arc] (cmo) -- node[below, insc] {$x_{\sigma_o}$} (oa);
\draw[fatarc] (cmo) |- node[above, yshift=-0mm, insc] {$\tup{X_{\sigma_m}, x_{\sigma_o}}$} (lmo);
\draw[arc] (ma) -- node[above, insc] {$x_{\sigma_m}$} (em);
\draw[arc] (oa) -- node[above, insc] {$x_{\sigma_o}$} (eo);
\draw[arc, original] (em) -- node[above, insc,black] {$x_{\sigma_m}$} (m0);
\draw[arc, original] (eo) -- node[above, insc,black] {$x_{\sigma_o}$} (o0);
\draw[arc,original] (mf) -- node[above, insc,original] {$x_{\sigma_m}$} (mc);
\draw[arc] (lmo) -| node[above, insc, near start,xshift=4mm] {$\tup{x_{\sigma_m},x_{\sigma_o}}$} (mc);
\draw[arc,original] (of) -- node[above, insc,original] {$x_{\sigma_o}$} (oc);
\tikzstyle{group}=[red!90!black,scale=.9,yshift=-8mm]
\node[yshift=11mm, xshift=-4mm, group] at (lmo) {$P_L$};
\node[original,trans,fill=white,right of=m0,yshift=-5mm,xshift=-5mm] (t) {};
\node[original,scale=.8] at ($(t)+(-.7,0)$) {\dots};
\node[original,scale=.8] at ($(t)+(.7,0)$) {\dots};
\draw[fatarc, <->] (t) -- (lmo);
\draw[fatarc,original] ($(t)+(-.5,.2)$) -- (t);
\draw[arc,original] ($(t)+(-.5,-.2)$) -- (t);
\draw[fatarc,original] (t) -- ($(t)+(.5,.2)$);
\draw[arc,original] (t) -- ($(t)+(.5,-.2)$);
\node[group] at (io) {$T_i$};
\node[group] at (ob) {$P_b$};
\node[group,yshift=-2mm] at (cmo) {$T_L$};
\node[group] at (oa) {$P_a$};
\node[group,original] at (eo) {$T_e$};
\node[group,original] at (o0) {$P_{\tikz{\node[starttoken,original,minimum width=1.8mm] {};}}$};
\node[group,original] at (of) {$P_{\tikz{\node[endtoken,original,minimum width=1.8mm] {};}}$};
\node[group,original] at (oc) {$T_c$};
% \node[group] at (9.8, -1.8) {$P_L$};
\end{tikzpicture}}\par
%     \caption{The formal mapping for creating the links of a relationship $(\sigma_m,\sigma_o) \in R$}
%     \label{fig:mapping}
% \end{figure}
%\end{example}

Before defining the new flow relations, we give an intuitive explanation considering the many-to-one relationship between wheels and a frame.
The idea is that all links for the later execution are created concurrently before the objects are placed in the initial places $\Pplay$ by $T_e$.
To that end, new emitting transitions $T_i$, instead of $T_e$, are used to create objects which are placed in $P_b$. Every link creation transition $\tlink{\sigma_m, \sigma_o} \in T_L$ is connected with inflow to the according places before link creation $P_b$, and with outflow to the respective link place $p_{\sigma_m,\sigma_o}\in P_L$ as well as places in $P_a$ after link creation.
The places in $P_a$ are connected to the former emitting transitions $T_e$.
When all transitions $T_e$ emitted the objects to the places $\Pplay$, the places in $P_L$ are thus populated with a fixed set of links.
Instead of emitting the wheels and frames directly to the $\Pplay$ places, each frame is considered jointly with a non-empty set of wheels by a transition $T_L$ to create the required links onto a place in $P_L$.
Crucially, to enforce the synchronization for a transition $t$ in $N_1$, which operates on $\sigma_m$ and $\sigma_o$, that is wheels and a frame, the respective transition in $\linkmap(N_1)$ must read and write the required links from the link places $P_L$.
at the end of the execution, any created link on $P_L$ is consumed together with the object of the many-side by the consuming transition in $T_c$, ensuring that all links are eventually consumed as well.

\noindent
Precisely, we define the flow relations as follows.
\begin{compactitem}
\item 
For a new emitting transition $t_{i,\sigma} \in T_i$, and $p \in P_b$ where $\coloring_2(p) = \tup{\sigma}$, the outflow is set as $F_{out,2}(t_{i,\sigma},p) = \tup{\nu_{\sigma}}$.
\item 
For a former emitting transition $t_{e,\sigma} \in T_e$ and $p\in \Pplay$ with $\coloring_2(p)=\tup{\sigma}$, we set
$F_{out,2}(t_{e,\sigma},p) = \tup{x_{\sigma}}$ if there is a $\sigma' \in \objtypes$ such that $(\sigma, \sigma')$ or $(\sigma', \sigma)$ in $R$, otherwise $F_{out,2}(t_{e,\sigma},p) = \tup{\nu_{\sigma}}$ as before.
\item 
For a place in $P_b$ 
which is for the one-side of the relationship, the inflow reads a single object $F_{in,2}(p_{\sigma_o,(\sigma_m, \sigma_o)},\tlink{\sigma_m, \sigma_o}) = \tup{x_{\sigma_o}}$; while for the many-side we have $F_{in,2}(p_{\sigma_m,(\sigma_m, \sigma_o)},\tlink{\sigma_m, \sigma_o}) = \tup{X_{\sigma_m}}$ reading a list of objects.
\item
Similarly, for a place in $P_a$ for the one-side of a relationship we have $F_{out,2}(\tlink{\sigma_m, \sigma_o},p_{(\sigma_m, \sigma_o),\sigma_o}) = \tup{x_{\sigma_o}}$ whereas for a place for the many-side we set $F_{out,2}(\tlink{\sigma_m, \sigma_o}, p_{(\sigma_m, \sigma_o),\sigma_m}) = \tup{X_{\sigma_m}}$.
\item 
Each link creation transition $\tlink{\sigma_m, \sigma_o}  \in T_L$ has an outflow to the respective place $p_{(\sigma_m, \sigma_o)} \in P_L$ with inscription $F_{out,2}(\tlink{\sigma_m, \sigma_o} , p_{(\sigma_m, \sigma_o)}) = \tup{X_{\sigma_m}, x_{\sigma_o}}$.
\item
Every former emitting transition $t_{e, \sigma} \in T_e$ has an inflow from all related $p \in P_a$ where $\coloring_2(p) = \tup{\sigma}$, so that $F_{in,2}(p, t_{e, \sigma}) = \tup{x_{\sigma}}$.
\item
For all link places $p_{(\sigma_m, \sigma_o)} \in P_L$ and transitions $t \in T$ such that there exist places $p_1, p_2 \in P$ with $\coloring_2(p_1) = \tup{\sigma_m}$ and $\coloring_2(p_2) = \tup{\sigma_o}$ and both $F_{in,1}(p_1,t)$ and $F_{in,1}(p_2,t)$ are defined, the transition needs to read and write the appropriate links.
If $t$ consumes a variable number of $\sigma_m$ objects, i.e. if $F_{in,1}(p_1,t)$ is a variable flow, we set $F_{in,2}(p_{(\sigma_m, \sigma_o)},t) = F_{out,2}(t, p_{(\sigma_m, \sigma_o)}) = \tup{X_{\sigma_m}, x_{\sigma_o}}$;
otherwise, $F_{in,2}(p_{(\sigma_m, \sigma_o)},t) = F_{out,2}(t, p_{(\sigma_m, \sigma_o)}) = \tup{x_{\sigma_m}, x_{\sigma_o}}$.
\item 
Finally, for all link places $p_{(\sigma_m, \sigma_o)} \in P_L$, an arc to the consuming transition $t_{c,\sigma_m} \in T_c$ consumes the links $F_{in,2}(p_{(\sigma_m, \sigma_o)},t_{c,\sigma_m}) = \tup{x_{\sigma_m}, x_{\sigma_o}}$.
\end{compactitem}
Note that for the modification of transitions from the OCPN, we assume that the OCPN is constructed from sound workflow nets. Hence, every transition has an inflow for an object type if and only if it has an outflow for it.
For example, if $p_{W,F}$ holds the tokens $(w_1,f_1), (w_2,f_1)$, then the transition \texttt{assemble\_w} can only bind the frame $f_1$ to any of the wheels $w_1$ or $w_2$ but never to $w_3$.

For all other $p\in P_1$ and $t\in T_1$
we set $F_{in,2}(p,t) = F_{in,1}(p,t)$, $F_{out,2}(t,p) = F_{out,1}(t,p)$, and $\coloring_2(p) = \coloring_1(p)$.
The new places are $P_2 = P_1 \cup P_b \cup P_L \cup P_a$, and the new transitions are $T_2 = T_1 \cup T_i \cup T_L$.
Then, the result of the mapping $\linkmap$ is the OPID 
$\linkmap(\directmap(N)) = (\objtypes_1, P_2, T_2, F_{in,2}, F_{out,2}, \coloring_2, \ell_2)$ 
where $\ell_2(t) = \tau$ for all $t \in T_i \cup T_L$ and $\ell_2(t) = \ell_1(t)$ otherwise.

The OPID $N_2$ is again interpreted as an accepting OPID where the initial marking $M_{init,2}$ and the final marking $M_{\mathit{final},2}$ require that all places be empty.
The OPID in \autoref{example:opid:sync} accepts $L_1$ because there exists a run, where the links for wheels and frames can be set as $(w_1, f_1), (w_2, f_1), (w_3, f_2), (w_4, f_2)$ in the orange link place and the tokens for handlebars and frames as $(h_1,f_1), (h_2,f_2)$ in the green link place. The transitions must bind objects correctly according to the links.
However, the execution in $L_2$ is rejected because $\tup{\m{collect},\{f_3, h_3, w_5, w_6\}}$ requires the links $(w_5,f_3)$ and $(w_6,f_3)$ but $\tup{\m{assemble\_w},\{f_4, w_6, w_7\}}$ requires the link $(w_6,f_4)$, which cannot have been created in the same execution.

\begin{example}[Synchronizing OPID for bike manufacturing]\label{example:opid:sync}The OPID is obtained from the OCPN in \autoref{example:ocpn} by applying $\directmap$ and $\linkmap$ for the stable relationships \textit{(Wheel, Frame)} (orange) and \textit{(Handlebar, Frame)} (green).

{\begin{centering}
\begin{tikzpicture}[node distance=22mm]
% Link creation Places & Transitions
\node[silentTrans] (iw) {};
\node[silentTrans, below of=iw, yshift=10mm] (if) {};
\node[silentTrans, below of=if, yshift=10mm] (ih) {};
\node[wheelPlace, right of=iw, xshift=-12mm] (wb) {};
\node[framePlace, right of=if, xshift=-12mm, yshift=4mm] (fb1) {};
\node[framePlace, right of=if, xshift=-12mm, yshift=-4mm] (fb2) {};
\node[handlePlace, right of=ih, xshift=-12mm] (hb) {};
\node[silentTrans, right of=fb1, xshift=-12mm, yshift=8mm] (cwf) {};
\node[silentTrans, right of=fb2, xshift=-12mm, yshift=-8mm] (chf) {};
\node[wheelPlace, right of=wb] (wa) {};
\node[framePlace, right of=fb1] (fa1) {};
\node[framePlace, right of=fb2] (fa2) {};
\node[handlePlace, right of=hb] (ha) {};
% Places
\node[silentTrans, right of=wa, xshift=-12mm] (ew) {};
\node[wheelPlace, right of=ew, xshift=-12mm] (w0) {} node[pseudostart] at (w0) {};
\node[wheelPlace, right of=w0, xshift=16mm] (w1) {};
\node[wheelPlace, right of=w1, xshift=24mm] (w2) {} node[pseudoend] at (w2) {};
\node[silentTrans, above of=w2, yshift=-10mm] (cw) {};
\node[silentTrans, below of=ew, yshift=10mm] (ef) {};
\node[framePlace, below of=w0, yshift=10mm] (f0) {} node[pseudostart] at (f0) {};
\node[framePlace, below of=w1, yshift=10mm] (f1) {};
\node[framePlace, below of=w2, yshift=10mm] (f2) {} node[pseudoend] at (f2) {};
\node[silentTrans, right of=f2, xshift=-12mm] (cf) {};
\node[silentTrans, below of=ef, yshift=10mm] (eh) {};
\node[handlePlace, below of=f0, yshift=10mm] (h0) {} node[pseudostart] at (h0) {};
\node[handlePlace, below of=f1, yshift=10mm] (h1) {};
\node[handlePlace, below of=f2, yshift=10mm] (h2) {} node[pseudoend] at (h2) {};
\node[silentTrans, below of=h2, yshift=10mm] (ch) {};
% Transitions
\node[trans, right of=f0, xshift=-2mm] (cp) {$\m{collect}$};
\node[trans, right of=w1] (aw) {$\m{assemble\_w}$};
\node[silentTrans, right of=f1] (s1) {};
\node[trans, right of=h1] (ah) {$\m{assemble\_h}$};
% Link Places
% \node[silentTrans, right of=w0] (dwf) {};
\node[wheelFramePlace, above of=cp, yshift=-3mm] (lwf) {};
% \node[silentTrans, right of=h0] (dhf) {};
\node[handleFramePlace, below of=cp, yshift=3mm] (lhf) {};
% Arcs: Emitter
\draw[arc] (iw) -- node[above, insc] {$\nu_w$} (wb);
\draw[arc] (if) -- node[above, insc] {$\nu_f$} (fb1);
\draw[arc] (if) -- node[below, insc] {$\nu_f$} (fb2);
\draw[arc] (ih) -- node[below, insc] {$\nu_h$} (hb);
% Arcs: Consumer
\draw[arc] (w2) -- node[right, insc] {$w$} (cw);
\draw[arc] (f2) -- node[below, insc] {$f$} (cf);
\draw[arc] (h2) -- node[right, insc] {$h$} (ch);
% Arcs: Wheels
\draw[fatarc] (wb) -- node[above, insc] {$W$} (cwf);
\draw[fatarc] (cwf) -- node[above, insc] {$W$} (wa);
\draw[arc] (wa) -- node[above, insc] {$w$} (ew);
\draw[arc] (ew) -- node[above, insc] {$w$} (w0);
\draw[fatarc] (w0) -- node[below, insc] {$W$} (cp);
\draw[fatarc] (cp) -- node[below, insc] {$W$} (w1);
\draw[fatarc] (w1) -- node[above, insc] {$W$} (aw);
\draw[fatarc] (aw) -- node[above, insc] {$W$} (w2);
% Arcs: Frame
\draw[arc] (fb1) -- node[below, insc] {$f$} (cwf);
\draw[arc] (fb2) -- node[above, insc] {$f$} (chf);
\draw[arc] (cwf) -- node[below, insc] {$f$} (fa1);
\draw[arc] (chf) -- node[above, insc] {$f$} (fa2);
\draw[arc] (fa1) -- node[above, insc] {$f$} (ef);
\draw[arc] (fa2) -- node[below, insc] {$f$} (ef);
\draw[arc] (ef) -- node[below, insc] {$f$} (f0);
\draw[arc] (f0) -- node[below, insc] {$f$} (cp);
\draw[arc] (cp) -- node[below, insc] {$f$} (f1);
\draw[arc] (f1) -- node[above, insc] {$f$} (aw);
\draw[arc] (aw) -- node[above, insc] {$f$} (f2);
\draw[arc] (f1) -- node[above, insc] {$f$} (ah);
\draw[arc] (ah) -- node[above, insc] {$f$} (f2);
\draw[arc] (f2) -- node[above, insc] {$f$} (s1);
\draw[arc] (s1) -- node[above, insc] {$f$} (f1);
% Arcs: Handle
\draw[fatarc] (hb) -- node[below, insc] {$H$} (chf);
\draw[fatarc] (chf) -- node[below, insc] {$H$} (ha);
\draw[arc] (ha) -- node[below, insc] {$h$} (eh);
\draw[arc] (eh) -- node[below, insc] {$h$} (h0);
\draw[arc] (h0) -- node[below, insc] {$h$} (cp);
\draw[arc] (cp) -- node[below, insc] {$h$} (h1);
\draw[arc] (h1) -- node[above, insc] {$h$} (ah);
\draw[arc] (ah) -- node[above, insc] {$h$} (h2);
% Arcs: Links
% \draw[fatline] (cwf) -- (lwf);
\draw[fatarc] (cwf) |- node[left, insc] {$\langle W, f \rangle$} (lwf);
\draw[fatarc, <->] (lwf) -- node[right, insc, yshift=3mm] {$\langle W, f \rangle$} (cp);
\draw[arc, <->] (lhf) -- node[right, insc, yshift=-3mm] {$\langle h, f \rangle$} (cp);
\draw[fatarc, <->] (lwf) -| node[right, insc] {$\langle W, f \rangle$} (aw);
\draw[arc] (lwf) |- node[left, insc] {$\langle w, f \rangle$} (cw);
\draw[fatarc] (chf) |- node[left, insc] {$\langle H, f \rangle$} (lhf);
\draw[arc, <->] (lhf) -| node[right, insc] {$\langle h, f \rangle$} (ah);
\draw[arc] (lhf) |- node[left, insc] {$\langle h, f \rangle$} (ch);
\end{tikzpicture}
\end{centering}}
\end{example}

\subsection{Equivalence for Executions with Stable Relationships}

In~\autoref{sec:equiv}, we show that an OCPN $N$ is equivalent to its OPID representation $N_1 = \directmap(N)$ for event log replay.
The mapping $N_2 = \linkmap(N_1)$ implements the stable relationships $R$ into $N_1$. We show that a run in the OCPN $N$ is conform to $R$ if and only if there is an accepted corresponding run in $N_2$.

\begin{theorem}
\label{theorem:sync:1}
Let $N_1\,{=}\,\directmap(N)$ be an equivalent OPID for an OCPN $N$ and $N_2=$ $\linkmap(N_1)$ a synchronizing OPID. There is an accepted run of $N_1$ of the form
\[
\Mempty \goto{*} M_0 \goto{(t_1,\beta_1)} M_1 \goto{(t_2,\beta_2)} \dots \goto{(t_n,\beta_n)} M_n \goto{*} \Mempty
\]
that conforms to the stable relationships $R$ if and only if there is an accepted run of $N_2$ of the form
\[
\Mempty \goto{*} M_b' \goto{*} M_a' \goto{*} M_0' \goto{(t_1,\beta_1)} M_1' \goto{(t_2,\beta_2)} \dots \goto{(t_n,\beta_n)} M_n' \goto{*} \Mempty
\]
such that $\forall p \in P_1 \colon M_i'(p)=M_i(p)$ for all $0 \leq i \leq n$.
\end{theorem}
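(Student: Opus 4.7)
The plan is to prove both implications by induction on the length of the main portion of the run, exploiting the observation that for every transition $t \in T$ in $N_2$ and every link place $p_{(\sigma_m,\sigma_o)} \in P_L$ with $F_{in,2}(p_{(\sigma_m,\sigma_o)},t)$ defined, by construction $F_{in,2}(p_{(\sigma_m,\sigma_o)},t) = F_{out,2}(t,p_{(\sigma_m,\sigma_o)})$. Hence firing such $t$ leaves the marking of every link place unchanged. This means the content of each $p_{(\sigma_m,\sigma_o)}$ is completely fixed by the initial ``preparation'' phase $\Mempty \goto{*} M_b' \goto{*} M_a' \goto{*} M_0'$ and only read (plus rewritten) during the main portion, until finally consumed via $T_c$.

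For the direction from $N_1$ to $N_2$, assume an accepted run of $N_1$ that conforms to $R$. For every $(\sigma_m,\sigma_o)\in R$, conformance ensures that each object $o_m$ of type $\sigma_m$ is bound together with exactly one object $o_o$ of type $\sigma_o$ across all bindings $\beta_i$; let $\mathcal L_{(\sigma_m,\sigma_o)}$ collect these pairs. I would then construct the $N_2$ run in three stages: (i) fire $T_i$ transitions to populate $P_b$ with exactly the objects of $\objects$, obtaining $M_b'$; (ii) for each one-side object $o_o$, fire the link transition $\tlink{\sigma_m,\sigma_o}$ once with the (non-empty) list of $o_m$'s assigned to it, thus depositing the pairs of $\mathcal L_{(\sigma_m,\sigma_o)}$ into $p_{(\sigma_m,\sigma_o)}$ and moving the objects to $P_a$; (iii) fire the $T_e$ transitions to move objects from $P_a$ (or directly, for object types not involved in any relationship) into $\Pplay$, reaching $M_0' = \directmap(M_0)$. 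By Theorem~\ref{theorem:equal:1}, the sequence of main firings $(t_i,\beta_i)$ is enabled on the $N_1$-places of $N_2$; the additional precondition from $P_L$ is satisfied because the required link tokens were deposited in stage (ii) and are preserved by the invariant above. A dual collapsing sequence using $T_c$ (now also draining $P_L$ along with the many-side tokens) yields $\Mempty$.

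For the direction from $N_2$ to $N_1$, I would extract the $N_1$ run by simply forgetting the firings of $T_i$, $T_L$, and the added inscriptions on $P_b$, $P_a$, $P_L$. Theorem~\ref{theorem:equal:1} already provides the equivalence of the retained main segment with an OCPN run. What remains is to show conformance to $R$. By the invariant, the marking of each $p_{(\sigma_m,\sigma_o)}$ is constant throughout the main portion, equal to the multiset of pairs created by $T_L$; moreover, the final $T_c$ phase can empty $p_{(\sigma_m,\sigma_o)}$ only if every pair there is consumed together with its many-side object, which by the injectivity of $\nu$-variables in the link-creation phase forces each object $o_m$ to appear in exactly one pair. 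Any main binding $\beta_i$ that combines $o_m$ with some $o_o$ must read $\langle o_m, o_o\rangle$ from $p_{(\sigma_m,\sigma_o)}$, so $o_m$ can only ever be bound to that unique $o_o$; this is exactly the defining condition of a stable many-to-one relationship.

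The main obstacle is the bookkeeping around the link-creation phase: one has to argue that the concurrent preparation phase $\Mempty \goto{*} M_b' \goto{*} M_a' \goto{*} M_0'$ can always be scheduled to deposit exactly the multiset of pairs required by the $N_1$ run (forward direction), and that any schedule of this phase produces a multiset of pairs that is functional from the many-side to the one-side (backward direction). Both rely on the fact that each one-side object is emitted exactly once, consumed by exactly one $T_L$ firing (since $t_{e,\sigma_o}$ requires it from $P_a$), and packaged with a non-empty list of many-side objects; together with the link-place invariant above, this pins down the relationship structure uniquely.
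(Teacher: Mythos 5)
Your proposal follows essentially the same route as the paper's proof: the invariant that $F_{in,2}(p_L,t)=F_{out,2}(t,p_L)$ keeps link places constant during the main segment, the three-stage construction of the preparation phase ($T_i$, then $T_L$ with the unique one-side object dictated by conformance, then $T_e$) for the forward direction, and extraction of the $N_1$ run plus the link-place constraint to derive $|lo(o_m,\sigma_o)|=1$ for the backward direction. The only minor slip is attributing the "each $o_m$ appears in exactly one link pair" fact to injectivity of $\nu$-variables rather than to the fact that $T_L$ consumes objects from $P_b$ without returning them, but you state the correct reason later, so the argument is sound.
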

%
% Use for extended paper:
While the detailed proof for \autoref{theorem:sync:1} is spelled out in Appendix~\ref{app:b} of the paper,
% Use for conference paper:
% While the detailed proof for \autoref{theorem:sync:1} is spelled out in Appendix B of the extended paper~\cite{extended_paper},
%\anjotodo{add reference to extended version} %here
%
we outline its three main arguments: (i) Any accepted run in $N_2$ is accepted in $N_1$. (ii) Any accepted run in $N_1$ conforming to $R$ is accepted by $N_2$. (iii) Any accepted run in $N_1$ not conforming to $R$ is rejected by $N_2$.
\begin{compactenum}
    \item[(i)] The emitting and link creating sequence $\Mempty \goto{*} M_b' \goto{*} M_a' \goto{*} M_0'$ results in a marking $M_0'$ that holds only tokens in the places $\Pplay \cup P_L$. An equivalent emitting sequence $\Mempty \goto{*} M_{0}$ exists for $N_1$ that produces the same objects in the initial places $\Pplay$.
    Any binding execution $(t_i,\beta_i)$ in the run of $N_2$ can be replayed in a corresponding marking of $N_1$ as it is less restrictive.
    Finally, there exists an appropriate sequence $M_n \goto{*} \Mempty$ of only consuming transitions $t \in T_c$ that removes the same objects as $M_n' \goto{*} \Mempty$.
    \item[(ii)] For the conforming run $\Mempty \goto{*} M_0$ that places objects to $\Pplay$ in $N_1$, there exists a corresponding run $\Mempty \goto{*} M_b' \goto{*} M_a' \goto{*} M_0'$ in $N_2$, which places the same objects onto the places $P_b$ in $M_b'$. Then, $M_b' \goto{*} M_a'$ creates all required links according to the binding executions $(t_i,\beta_i)$ in the run of $N_1$, such that for every object $o_m$ of type $\sigma_m$ in a relationship with an object of type $\sigma_o$, the links in $P_L$ are set for all $lo(o_m, \sigma_o)$. Next, $M_a' \goto{*} M_0'$ places the objects in $\Pplay$ such that $M_0'$ holds tokens only in places $p' \in \Pplay \cup P_L$. The created links are not changed and transitions that operate on two object types in a relationship are required to adhere to them.\\
    If the run in $N_1$ is conform, it only binds an object $o_m$ to a single object of $\sigma_o$, hence, $|lo(o_m, \sigma_o)| = 1$ and there exists a link such that $N_2$ can accept the run as well.
    Finally, for $M_n \goto{*} \Mempty$ of only consuming transitions $t \in T_c$, $M_n' \goto{*} \Mempty$ can consume the same objects including their respective links.
     \item[(iii)] By contradiction, we show that $N_2$ cannot create links that violate the conformance criteria. No object in a relationship can exist without a link. Any object $o_m$ with $ot(o_m) = \sigma_m$ will receive exactly one link to an object of type $\sigma_o$ and not more. As the bindings of any run in $N_2$ are restricted by the links, a non-conforming run of $N_1$ has no counterpart in $N_2$.
\end{compactenum}
By expanding the mapping $\directmap$ with stable relationships in $\linkmap$ for a given OCPN and its relationships $R$, we can generate an OPID $\linkmap(\directmap(N))$ that accepts an execution in $N$ if and only if it conforms to $R$.

\section{Discussion and Related Work}\label{sec:related_work}

In the following, we discuss the presence of stable many-to-one relationships in existing benchmark OCELs, and we outline a prototype implementation of our mapping. 
Both artifacts are available online\footnote{\url{https://github.com/AnjoSs/To-bind-or-not-to-bind}}.
Finally, we comment on the implications for the research areas of conformance checking and formal analysis.

\subsection{Stable Relationships in OCEL}
Stable relationships can be discovered at the type level from OCELs, as described in the TOTeM tool~\cite{lissTOTeMTemporalObject2024}. To substantiate our intuition that, in many contexts, the executions of object-centric processes indeed have stable relationships, we applied an according analysis to existing benchmark logs.
The result in \autoref{tab:logs} shows that all investigated logs contain some stable many-to-one relationships.
The full data is available from the repository.

The stability of object relationships is a typical (often implicit) assumption in the literature. In the widely adopted OCEL 2.0 standard~\cite{bertiOCELObjectCentricEvent2024}, links are recorded without a notion of time, assuming global existence and stability. For discovered models, such as OCPNs~\cite{vanderaalstDiscoveringObjectcentricPetri2020}, class diagrams~\cite{lissTOTeMTemporalObject2024}, and even simulation models~\cite{knoppDiscoveringObjectCentricProcess2023}, this yields the same notion of stable relationship introduced here.

The intended stable relationships may be violated in OCELs. When investigating noise, that is, infrequent outliers, it can be seen that some logs contain a few violations of the intended stable many-to-one relationships. In \autoref{tab:logs}, we report how the number of stable relationships changes when considering noise. 
For example, LLM-based simulations for LRMS -- O2C contain 20 more stable relationships when only focusing on the 90\% most frequent observations.
This relates to the approach in~\cite{xiuQuantifyingConformanceObjectCentric2024}, where this type of violation is identified in a real-world use case.
Our mapping to OPIDs can assist in identifying exact executions that violate stable relationships.
\begin{table}[t]
    \caption{OCELs with their number of object types, all relationships (including one-to-one), stable many-to-one relationships with, and without filtering noise.}
    \vspace*{-2mm}
    \centering
   \begin{tabular}{|l|c|c|c|c|}
       \hline
       \textbf{OCEL}        & {\#obj. types} & \#relationships & {\#m2o(noise=0.1)} & {\#m2o(noise=0.0)} \\
       \hline
       Logistics   & 7 & 9 & 6 (4+2*1) & 6 (4+2*1)\\
       \hline
       Order Management   & 6 & 13 & 5 (5+0) & 5 (5+0)  \\
       \hline
       Procure to Pay   & 7 & 9 & 10 (6+2*2) & 10 (6+2*2) \\
       \hline
       Hinge Production   & 12 & 39 & 35 (13+2*11) & 35 (13+2*11) \\
       \hline
       Age of Empires   & 30 & 135 & \textbf{87} (77+2*5) & \textbf{67} (57+2*5) \\
       \hline
       LRMS -- O2C   & 9 & 11 & \textbf{18} (2+2*8) & \textbf{16} (4+2*6)  \\
       \hline
       LRMS -- P2P   & 6 & 15 & \textbf{11}  (3+2*4) & \textbf{0} (0+0) \\
       \hline
       LRMS -- Hiring   & 6 & 15 & 6 (6+0) & 6 (6+0) \\
       \hline
       LRMS -- Hospital   & 8 & 18 & 6 (6+0) & 6 (6+0) \\
       \hline
    \end{tabular}
    \label{tab:logs}
    \vspace{-7mm}
\end{table}

However, our identified notion of stable relationships yield some limitations.
Our definition of stable many-to-one relationships only considers existential relations of the type \texttt{(1..*,1..1)}. Non-existential relationships (e.g. \texttt{(0..*,0..1)}) are not present in the existing OCELs but should be considered in future investigations as well.
In general, the need to enrich object-centric event data with data modeling concepts such as non-rigid relationships is being increasingly recognized.
Goossens et al.~\cite{goossensObjectCentricEventLogs2024} point out that from a database perspective, links between objects might indeed change over time. 
It is not possible to differentiate between a non-rigid many-to-one relationship and an intended many-to-many relationship without domain knowledge.
Therefore, our approach emphasizes an explicit declaration of object types and relationships, e.g., as data models.
In addition,~\cite{swevelsImplementingObjectCentricEvent2024} highlights the need for richer representations of object-centric event data, where the effects of events are explicitly tracked, allowing identification of time intervals within which a relationship exists.
An ongoing effort aims to consolidate these different considerations in a single standard for representing object-centric event data \cite{FMLA24}, motivating further research on the matter.

\subsection{Process Discovery and Prototypical Implementation}
The original discovery of OCPNs~\cite{vanderaalstDiscoveringObjectcentricPetri2020} is proposed as a divide-and-conquer approach in which each object type defines an individual workflow net~\cite{vanderaalstDiscoveringObjectcentricPetri2020}.
The proposed mapping from OCPNs to OPIDs is based on this specific characteristic structure of OCPNs. Therefore, our mapping rules and formal proofs are limited to such OCPNs.
OCPN discovery~\cite{vanderaalstDiscoveringObjectcentricPetri2020} has recently been extended in~\cite{vandettenDiscoveringCompactLive2024}, which proposes the discovery of object-centric process trees. From these process trees, identifier-sound~\cite{vanderwerfCorrectnessNotionsPetri2024} OCPNs can be generated, but without synchronization semantics. Analog to ~\cite{vanderaalstDiscoveringObjectcentricPetri2020}, the resulting Petri nets are constructed as combinations of workflow nets for each object type.
In a second extension, ~\cite{vandettenObjectSynchronizationsSpecializations2024} discovers so-called silent objects from an OCEL, which reify and synchronize arbitrary sets of objects.
This extension adds places for the discovered silent object types and enforces synchronization of the observed co-occuring objects. %here
However, the discovered Petri nets are also limited to replaying the behavior of an OCEL. Still, the concept of silent objects may provide insights on how to extend our investigation to many-to-many relationships.

However, the discovery of OPIDs has not yet been tackled.
To show how discovered OCPNs can be directly mapped to OPIDs, we provide a prototypical implementation of our proposed mappings.
Our prototype implements both transformations $\directmap$ and $\linkmap$ in a Python command-line script. The script takes as input an OCPN in json format, and for $\linkmap$ the desired stable many-to-one relationships $R$. The resulting OPID is output in a pnml format, which can be used as input for conformance checking~\cite{gianolaObjectCentricConformanceAlignments2024}, and in the dot format for visualization.
The code, as well as the results obtained by applying it to a set of OCPNs collected in earlier work~\cite{ocpn-visualizer}, can be obtained from the repository.

\subsection{Conformance Checking and Formal Analysis}
OCPNs have been studied for conformance checking.
Fitness and precision metrics have been introduced to quantify the suitability of an OCPN to replay a log~\cite{adamsPrecisionFitnessObjectCentric2021}, and techniques have been studied to calculate alignments for OCPNs~\cite{LiAA23}. However, both focus only on individual objects and not on their mutual links.
For varying links, these approaches are too liberal in identifying deviations~\cite{gianolaObjectCentricConformanceAlignments2024}, calling for richer formalisms such as PNIDs or OPIDs, which are computationally more demanding~\cite{gianolaObjectCentricConformanceAlignments2024}.
Still, with \autoref{theorem:sync:1} we show that conformance checking with OCPNs~\cite{adamsPrecisionFitnessObjectCentric2021, LiAA23} is appropriate if the log satisfies the intended stable relationships. In practice, this check through the discovery of data models~\cite{lissTOTeMTemporalObject2024} can be more performant than the computation of synchronization alignments~\cite{gianolaObjectCentricConformanceAlignments2024}.
If the log violates stable relationships, OPIDs can be used to compute conformance alignments, including specific penalties to quantify the severity of the violations~\cite{gianolaObjectCentricConformanceAlignments2024}.
In the future, the fitness and precision metrics should be extended to respect the object relationships defined in OPIDs.
Other process models have also been considered for object-centric conformance checking, that is, object-centric directly follows graphs~\cite{parkConformanceCheckingPerformance2024a} and object-centric behavioral constraints~\cite{xiuDiagnosingConformanceObjectCentric2023}.

Different formal modeling approaches have been proposed for object-centric processes, such as variants of Petri nets with identifiers (PNIDs) \cite{ghilardiPetriNetbasedObjectcentric2022a,vanderwerfCorrectnessNotionsPetri2024}, or synchronous proclets \cite{fahlandDescribingBehaviorProcesses2019}.
These two formalisms provide extended orthogonal features when related to OCPNs. On the one hand, PNIDs provide explicit modeling mechanisms for creating, updating, and removing objects and relationships, while lacking variable arcs to evolving multiple objects in a single transition. On the other hand, proclets extend the variable-arc feature of OCPNs, by supporting different forms of synchronization where many objects of a certain type synchronously move together with a single, parent object; however, they lack support for the explicit manipulation of objects and relationships, as supported in PNIDs.
In a recent effort, \emph{object-centric PNIDs} (OPIDs \cite{gianolaObjectCentricConformanceAlignments2024}) have emerged to provide an integrated formalism that unifies the distinctive features of OCPNs with variable arcs, PNIDs and synchronous proclets. Thus, OPID is the natural candidate for enriching an OCPN with the desired synchronization behavior.
% The relationship between the established approaches of Proclets and PNIDs and the novel OPIDs needs to be studied in the future, where our mapping can now also serve as the missing link for discovering according formalisms.

As we show, OPIDs with synchronization can be mapped directly from discovered OCPNs.
As OPIDs can emit novel objects, they can serve not only as trace recognizers but also as trace generators. Therefore, our mapping serves as a missing link between object-centric process discovery and their formal analysis and execution.
The relationship between the established approaches of Proclets and PNIDs, the novel OPIDs and discovered process models needs to be studied in the future, where our mapping can now also serve as the missing link.
For example, van der Werf et al.~\cite{vanderwerfCorrectnessNotionsPetri2024} define correctness notions for PNIDs, which can now be applied to discovered process models through our mapping.
Finally, OPIDs can also serve as a goal formalism for object-centric process modeling approaches, enabling the comparison of model semantics and discovery results. %here
\section{Conclusion}\label{sec:conclusion}

This paper shows how discovered object-centric process models can be enhanced with explicit data modeling concepts to ensure precision and trust in object relationships.
Our contribution rigorously closes the gap between the two representative formal models for object-centric processes: OCPNs---an underfitting model with efficient discovery and available conformance checking techniques, and OPIDs---a more sophisticated model for which discovery techniques are still missing and conformance checking techniques are computationally more demanding.
Our findings indicate that when focusing on executions where many-to-one relationships are stable, that is, once a link is established it is rigidly maintained over time, these two models coincide.
The implications of this result can be read in two ways.
(i) While OCPNs are, in general, underfitting, they become precise enough when describing and prescribing executions that establish stable many-to-one relationships. %here
(ii) When restricting attention to this type of executions, discovery and conformance checking for OPIDs can be directly tackled through techniques natively defined for OCPNs. %here
In the future, stability in many-to-many relations in event logs remains to be studied in detail.
These results can serve as a promising link between object-centric process discovery and formal analysis.

\begin{credits}
\subsubsection{\ackname}
M. Montali was partially supported by the NextGenerationEU FAIR PE0000013 project MAIPM (CUP C63C22000770006) and the PRIN MIUR project PINPOINT Prot. 2020FNEB27. S. Winkler was partially supported by the UNIBZ project TEKE. A. Gianola was partly supported by Portuguese national funds through Fundação para a Ciência e a Tecnologia, I.P. (FCT), under projects UIDB/50021/2020 (DOI: 10.54499/UIDB/50021/2020). This work was partially supported by the ‘OptiGov’ project, with ref. n. 2024.07385.IACDC (DOI: 10.54499/2024.07385.IACDC), fully funded by the ‘Plano de Recuperação e Resiliência’(PRR) under the investment ‘RE-C05-i08 - Ciência Mais Digital’, measure ‘RE-C05-i08.m04’ (in accordance with the FCT Notice No. 04/C05-i08/2024), framed within the financing agreement signed between the ‘Estrutura de Missão Recuperar Portugal’ (EMRP) and FCT as an intermediary beneficiary.

% \subsubsection{\discintname}
% ...
\end{credits}

\bibliographystyle{splncs04}
\bibliography{paper.bib}

\newpage
\appendix
\section{Proof for Equivalent OPIDs}
\label{app:a}
\setcounter{theorem}{0}

To prove the language equivalence of $N$ and $\directmap(N)$ in general, we first define some auxiliary notions.

Given a marking $M\colon Q \to \{0,1\}$ of the OCPN $N$, we define the corresponding marking $\directmap(M)$ of the OPID $\directmap(N)$ as $\directmap(M)(p) := \{ \langle o \rangle \mid o \in \objects \wedge M(p,o) > 0\}$.

Given a binding $b$ for an OCPN $N$, we define a corresponding binding $\beta:=\directmap(b)$ for the OPID $\directmap(N)$ as follows:
for all $\sigma\in\dom(b)$, we set $\beta(v_\sigma)=b(\sigma)$ if $|b(\sigma)| = 1$, and  $\beta(V_\sigma)=[b(\sigma)]$ otherwise, where
$[b(\sigma)]$ denotes a list containing the elements of the set $b(\sigma)$ in an arbitrary order.

Moreover, we call a sequence of transition firings $M \goto{*} M'$ in $\directmap(N)$ an \emph{emitting sequence} if it uses only emitting transitions in $T_e$ and a \emph{collapsing sequence} if it uses only consuming transitions in $T_c$.

We can then show the following equivalence statement:

\begin{theorem}
\label{theorem:equal}
Let $O$ be a set of objects, $N$ be an accepting OCPN, $N' = \directmap(N)$, and $\Mempty$ denote the empty marking of $N'$.

There exists a run of $N$:
\begin{equation}
\label{eq:OCPN:run:ex}
M_0 \goto{(t_1,b_1)} M_1 \goto{(t_2,b_2)} \dots \goto{(t_n,b_n)} M_n
\end{equation}
with $M_0{=}\Minit^O$ and $M_n{=}\Mfinal^O$
if and only if there is a run of $N'$ over the set of objects $O$:
\begin{equation}
\label{eq:OPID:run:ex}
\Mempty \goto{*} M_0' \goto{(t_1,\beta_1)} M_1' \goto{(t_2,\beta_2)} \dots \goto{(t_n,\beta_n)} M_n' \goto{*} \Mempty
\end{equation}
such that $M_i'=\directmap(M_i)$ for all $0 \leq i \leq n$ and $\beta_i = \directmap(b_i)$ for all $1\leq i \leq n$,
such that $\Mempty \goto{*} M_0'$ is an emitting sequence, $M_n' \goto{*} \Mempty$ a collapsing sequence, and no emitting or collapsing transitions occur in between.
\end{theorem}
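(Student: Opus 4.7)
The proof proceeds by establishing both directions of the biconditional, each by induction on the length $n$ of the middle (OCPN-style) part of the run. The three phases of the OPID run---emitting sequence, main sequence, and collapsing sequence---are handled separately, with the main sequence being the inductive heart of the argument and the emitting and collapsing sequences treated as lemmas that translate initial and final markings between the two formalisms. Throughout, I rely on the already-defined translation $\directmap$ on markings and on bindings.

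\paragraph{Emitting and collapsing sequences.} First, I show that from $\Mempty$ there is an emitting sequence ending in $\directmap(\Minit^O)$: for every object type $\sigma\in\Sigma$ and every object $o\in O_\sigma$, fire $t_{e,\sigma}$ once with $\beta(\nu_\sigma)=o$. Since the $\nu_\sigma$-variable is fresh and $M_{\emptyset}$ contains no tokens, each such firing is enabled, and the freshness constraint on $\nu$ is satisfied because distinct $o$'s are picked. After $|O|$ such firings, each play place $p_\sigma\in \Pplay$ contains exactly $\{\tup{o}\mid o\in O_\sigma\}$, which by definition equals $\directmap(\Minit^O)(p_\sigma)$; all other places remain empty, matching $\directmap(\Minit^O)$. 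Symmetrically, from $\directmap(\Mfinal^O)$ we can fire consuming transitions $t_{c,\sigma}$ for every token $\tup{o}$ residing in a stop place, reaching $\Mempty$. Both directions of this correspondence between finite $O$ and emitting/collapsing sequences are straightforward: any emitting sequence from $\Mempty$ reaching a marking supported on $\Pplay$ must create a well-defined set of objects, which I identify with $O$.

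\paragraph{Main sequence by induction.} The inductive step is the key technical lemma: for any marking $M$ of $N$ and binding $(t,b)\in B$,
$M \goto{(t,b)} M'$ in $N$ iff $\directmap(M) \goto{(t,\directmap(b))} \directmap(M')$ in $\directmap(N)$. For enabledness, I unfold $\cons(t,b)$ and the inflow: at a non-variable arc $(p,t)\in F\setminus F_{\mathit{var}}$, $|b(pt(p))|=1$ so $\vec{\directmap(b)}(\inflow(p,t))=\{\tup{o}\}$ where $\{o\}=b(pt(p))$, while $(p,o)\in\cons(t,b)$; at a variable arc, $\vec{\directmap(b)}(\inflow(p,t))=\{\tup{o}\mid o\in b(pt(p))\}$ and the same objects constitute $\cons(t,b)$ restricted to $p$. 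Since $\directmap(M)(p)=\{\tup{o}\mid M(p,o)=1\}$, the OCPN enabledness condition $\cons(t,b)\leq M$ is equivalent to the OPID enabledness condition $\vec{\directmap(b)}(\inflow(p,t))\subseteq \directmap(M)(p)$ for all $p\in \pre{t}$. An analogous unfolding for $\prod(t,b)$ and $\outflow$, combined with the three cases $p\in \pre{t}\setminus \post{t}$, $p\in \post{t}\setminus \pre{t}$ and $p\in \pre{t}\cap \post{t}$ in the OPID firing rule, yields $\directmap(M')(p)=M'_{\mathit{OPID}}(p)$ at every place. Iterating this equivalence along the run gives both directions of the main sequence correspondence.

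\paragraph{Expected obstacle.} I expect the technical bookkeeping for $p\in \pre{t}\cap \post{t}$ to be the trickiest point, because in the OCPN the multiset equation $M_{i+1}=M_i-\cons(t,b)+\prod(t,b)$ silently relies on the $\{0,1\}$-valuedness of markings (guaranteed by soundness of the underlying workflow nets per object type, as noted in the preliminaries), while the OPID firing rule special-cases such places to first remove and then add. Under the $\{0,1\}$ assumption these agree, but justifying this requires arguing that a token is read-written only when the same object appears in both $b(pt(p))$-contributions, which in turn follows from the variability alignment $tpl_{var}(t)\cap tpl_{nv}(t)=\emptyset$ in well-formed OCPNs. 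A minor additional subtlety is ensuring injectivity/freshness of $\nu$-bindings during the emitting phase and that no emitting or consuming transition is enabled once tokens have moved off of $\Pplay\cup\Pstop$; both follow from the fact that emitting transitions have no incoming arcs and consuming transitions only read from $\Pstop$.
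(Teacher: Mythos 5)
Your proposal is correct and follows essentially the same route as the paper's proof: separate treatment of the emitting and collapsing phases, plus a single-step firing correspondence (enabledness and marking update under $\directmap$, including the $\pre{t}\cap\post{t}$ case) iterated inductively over the main sequence. The only cosmetic difference is that you package the step lemma as a biconditional, whereas the paper proves the two directions separately and explicitly invokes injectivity of $\directmap$ to recover $M_i$ and $b_i$ from $M_i'$ and $\beta_i$ in the backward direction.
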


\begin{proof}
\begin{compactenum}
\item[(1)]
First, by definition of $T_e$ in $N'$, there is an emitting sequence $\Mempty \goto{*} M_0'$ since $M_0'=\directmap(M_0)$ has tokens only in places that are initial in $N$. Moreover, by definition of $T_c$ there is a collapsing sequence $M_n' \goto{*} \Mempty$ because $M_n'=\directmap(M_n)$ has tokens only in places that are final in $N$.

Next, we show the following ($\star$): 
for all $1\leq i \leq n$, 
a transition firing $M_{i-1} \goto{(t_i,b_i)} M_{i}$ implies
$M_{i-1}' \goto{(t_i,\beta_i)} M_{i}'$.
Note first that the binding $\beta_i=\directmap(b_i)$ is type-preserving and well-defined since this also holds for $b_i$.
All inscriptions in $\directmap(N)$ are singleton tuples, 
so the extension $\vec \beta_i$ of $\beta_i$ to inscriptions is given by 
$\vec \beta_i(\tup{x_\sigma}) = \{\tup{\beta_i(x_\sigma)}\}$ 
for variables $x_\sigma\in \nuvarset \cup \varset$, and by 
$\vec \beta_i(\tup{X_\sigma}) = \{\tup{a} \mid a\text{ occurs in } \beta_i(X_\sigma)\} $ 
for all $X_\sigma\in \listvarset$.
In both cases, we have 
$\vec \beta_i(x_\sigma) = \vec \beta_i(X_\sigma) = \{\tup{a} \mid a \in b_i(\sigma)\}$.

Now, as $t_i$ is enabled with $b_i$ in marking $M_{i-1}$ of $N$, 
$\Cons(t_i, b_i) \leq M_{i-1}$ holds for 
$\Cons(t_i, b_i) = \{(p, o)\mid p \in \pre{t_i} \text{ and } o \in b_i(pt(p))\}$.
We have 
$\vec \beta_i(\inflow(p,t_i))=\{ \tup{o} \mid o \in b_i(pt(p))\}$, 
and therefore
$\vec \beta_i(\inflow(p,t_i)) \subseteq M_{i-1}'(p)$ 
for all $p \in \pre{t_i}$, so that
$t_i$ is enabled in marking $\directmap(M_{i-1})$ of $N'$.
The resulting marking in $N$ is 
$M_{i} = (M_{i-1} \setminus \Cons(t_i, b_i)) \cup \Prod(t, b)$, for
$\Prod(t, b) = \{(p, o)\mid p \in \post t \text{ and } o \in b(pt(p))\}$.
The equivalent marking in $N'$ is given by 
$M_i'(p)=M_{i-1}'(p) \setminus \vec \beta_i(\inflow(p,t_i)) $ 
for all $p \in \pre{t_i}\setminus \post {t_i}$,
$M_i'(p)=M_{i-1}'(p) \cup \vec \beta_i(\outflow(p,t_i))$ 
for all $p \in {\post t_i}\setminus {\pre t_i}$, and 
$M_i'(p)=M_{i-1}'(p) \setminus \vec \beta_i(\inflow(p,t_i)) \cup \vec \beta_i(\outflow(p,t_i))$ 
for all $p \in \pre t_i\cap \post t_i$. 
We have 
$\vec \beta_i(\inflow(p,t_i))=\vec \beta_i(\outflow(p,t_i))=\{ \tup{o} \mid o \in b_i(pt(p))\}$, so $M_i' = \directmap(M_i)$. Thus $(\star)$ holds.

A simple inductive argument using $(\star)$, in addition to the observations about emitting and collapsing sequences, shows shows that the run \eqref{eq:OPID:run:ex} is possible.
\item[(2)]
First, since $\Mempty \goto{*} M_0'$ is an emitting sequence from the initial marking, $M_0'$ has tokens only in places that are initial in $N$. So there is a marking $M_0$ of $N$ that is a valid initial marking of $N$ and satisfies $M_0' = \directmap(M_0)$.

Next, we show the following ($\star$): 
for all $1\leq i \leq n$, given $M_{i-1}$ which satisfies $M_{i-1}'=\directmap(M_{i-1})$, there are a marking $M_i$ and a binding $b_i$ that satisfy
$M_{i-1} \goto{(t_i,b_i)} M_{i}$, $M_{i}'=\directmap(M_{i})$, and $\beta_i=\directmap(b_i)$. Since $\directmap$ is injective on markings and bindings, $M_i$ and $b_i$ can actually be obtained from $M_i'$ and $\beta_i$ by applying the inverse of $\directmap$.
As $t_i$ is enabled in $M_{i-1}'$, it holds that
$\vec \beta_i(\inflow(p,t_i)) \subseteq M_{i-1}'(p)$ for all $p\in \pre{t_i}$.
Since $\vec \beta_i(\inflow(p,t_i))=\{ \tup{o} \mid o \in b_i(pt(p))\}$ (see Item (1)), 
$\Cons(t_i, b_i) \leq M_{i-1}$ holds, so $t_i$ is enabled in $M_{i-1}$.
The resulting marking $M_i'$ in $N'$ satisfies 
$M_i'(p)=M_{i-1}'(p) \setminus \vec \beta_i(\inflow(p,t_i)) $ 
for all $p \in \pre{t_i}\setminus \post {t_i}$,
$M_i'(p)=M_{i-1}'(p) \cup \vec \beta_i(\outflow(p,t_i))$ 
for all $p \in {\post t_i}\setminus {\pre t_i}$, and 
$M_i'(p)=M_{i-1}'(p) \setminus \vec \beta_i(\inflow(p,t_i)) \cup \vec \beta_i(\outflow(p,t_i))$ 
for all $p \in \pre t_i\cap \post t_i$. 
The resulting marking $M_i$ in $N$ satisfies 
$M_{i} = (M_{i-1} \setminus \Cons(t_i, b_i)) \cup \Prod(t, b)$.
Since
$\vec \beta_i(\inflow(p,t_i))=\vec \beta_i(\outflow(p,t_i))=\{ \tup{o} \mid o \in b_i(pt(p))\}$, it holds that $M_i' = \directmap(M_i)$.
Thus $(\star)$ holds and an inductive argument shows that the run \eqref{eq:OCPN:run:ex} is possible.

Finally, since $M_n' \goto{*} \Mempty$ is a collapsing sequence to an empty marking, $M_n'=\directmap(M_n)$ has tokens only in places that are final in $N$, so $\Mfinal$ is a suitable final marking for $N$.
\qed
\end{compactenum}
\end{proof}

\section{Proof for Synchronizing OPIDs}
\label{app:b}

In Appendix~\ref{app:a}, the equivalence of an OCPN $N$ and its OPID representation $N_1 = \linkmap(N)$ for replaying an event log is shown.
The mapping $\linkmap$ implements the stable relationships $R$ into $N_1$.
We prove the correct mapping by showing three things: First, any accepted run in $N_2$ is also accepted in $N_1$. Second, any accepted run in $N_1$, which is conform to the stable relationships $R$, is accepted by $N_2$. Third, any accepted run in $N_1$, which that does not conform to the stable relationships $R$, is rejected by $N_2$.

\begin{theorem}
\label{theorem:sync}
Let $N_1 = \directmap(N) = (\Sigma_1, P_1, T_1, F_{in,1}, F_{out,1}, \coloring_1, \ell_1)$ be the equivalent OPID for the accepting OCPN $N$ over the set of objects $\objects$. 
Let $N_2 = \linkmap(N_1) = (\Sigma_2, P_2, T_2, F_{in,2}, F_{out,2}, \coloring_2, \ell_2)$ for the set $R$ of stable object type relationships, and $\Mempty$ denote the empty marking of $N_1$ and $N_2$.
\begin{compactenum}
\item[(1)] For any accepted run of $N_2$:
\[
\Mempty \goto{*} M_0' \goto{(t_1,\beta_1)} M_1' \goto{(t_2,\beta_2)} \dots \goto{(t_n,\beta_n)} M_n' \goto{*} \Mempty
\]
such that $\Mempty \goto{*} M_0'$ is an emitting and link creation sequence of only transitions in $t \in T_i \cup T_L \cup T_e$, $M_n' \goto{*} \Mempty$ is a collapsing sequence of only transitions $t \in T_c$, and no transitions $t \in T_i \cup T_L \cup T_e \cup T_c$ occur in between,
there is an accepted run of $N_1$ of the form
\begin{equation}
\label{eq:OPID2:run:ex}
\Mempty \goto{*} M_0 \goto{(t_1,\beta_1)} M_1 \goto{(t_2,\beta_2)} \dots \goto{(t_n,\beta_n)} M_n \goto{*} \Mempty
\end{equation}
such that $\forall p \in P_1 \colon M_i'(p)=M_i(p)$ for all $0 \leq i \leq n$. 
\item[(2)] For an accepted run of $N_1$ that respects the 
many-to-one relationships $R$:
\[
\Mempty \goto{*} M_0 \goto{(t_1,\beta_1)} M_1 \goto{(t_2,\beta_2)} \dots \goto{(t_n,\beta_n)} M_n \goto{*} \Mempty
\]
there is an accepted run of $N_2$ of the form
\begin{equation}
\label{eq:Sync:run:ex}
\Mempty \goto{*} M_b' \goto{*} M_a' \goto{*} M_0' \goto{(t_1,\beta_1)} M_1' \goto{(t_2,\beta_2)} \dots \goto{(t_n,\beta_n)} M_n' \goto{*} \Mempty
\end{equation}
such that $\forall p \in P_1 \colon M_i'(p)=M_i(p)$ for all $0 \leq i \leq n$.
\end{compactenum}
\item[(3)] For an accepted run of $N_1$ that does not respect the 
stable relationships $R$:
\[
\Mempty \goto{*} M_0 \goto{(t_1,\beta_1)} M_1 \goto{(t_2,\beta_2)} \dots \goto{(t_n,\beta_n)} M_n \goto{*} \Mempty
\]
there can be no accepted run of $N_2$ of the form
\begin{equation}
\label{eq:Sync:rej:ex}
\Mempty \goto{*} M_0' \goto{(t_1,\beta_1)} M_1' \goto{(t_2,\beta_2)} \dots \goto{(t_n,\beta_n)} M_n' \goto{*} \Mempty
\end{equation}
such that $\forall p \in P_1 \colon M_i'(p)=M_i(p)$ for all $0 \leq i \leq n$.
\end{theorem}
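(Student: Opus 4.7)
The plan is to prove the three parts separately, leveraging Theorem~\ref{theorem:equal:1} to handle the emitting and collapsing phases and treating the link places $P_L$ as a fixed ``relational context'' populated once by $T_L$ and then invariantly read/written by the transitions in $T$. The projection of a marking $M'$ of $N_2$ to $P_1$, obtained by restricting $M'$ to the original places, will be the main bridge between the two nets.

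For part (1), I would show that projecting any run of $N_2$ to $P_1$ yields a valid run of $N_1$. The prefix $\Mempty \goto{*} M_0'$ uses only transitions in $T_i \cup T_L \cup T_e$: $T_i$ and $T_L$ act solely on $P_b \cup P_a \cup P_L$, while $T_e$ moves objects from $P_a$ (or creates them fresh, for types outside $R$) onto $\Pplay$. Restricted to $P_1$, $M_0'$ therefore coincides with the marking produced by a pure emitting sequence $\Mempty \goto{*} M_0$ in $N_1$. For each binding execution $(t_i,\beta_i)$ with $t_i \in T$, the inflows of $t_i$ in $N_2$ are a superset of those in $N_1$ (only the inflows from $P_L$ are added), so enabledness in $N_2$ implies enabledness in $N_1$ at the projected marking and the effect on $P_1$-places is identical. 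The collapsing sequence $M_n' \goto{*} \Mempty$ projects to a collapsing sequence of $N_1$, concluding this direction.

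For part (2), given a conforming run of $N_1$, I would build a run of $N_2$ in three phases. Phase one fires $T_i$ to place all objects involved in some relationship on $P_b$, and $T_e$ for the remaining types directly onto $\Pplay$, reaching $M_b'$. Phase two fires $\tlink{\sigma_m,\sigma_o}$ once per object $o_o$ on the one-side of each $(\sigma_m,\sigma_o)\in R$, binding $o_o$ with the list of all $o_m$ such that $lo(o_m,\sigma_o)=\{o_o\}$; by conformance and the fact that the many-side is non-empty this list is well-defined and unambiguous, so $P_L$ ends up holding exactly the token $\tup{o_m,o_o}$ for every related pair. Phase three fires the remaining $T_e$ transitions, moving the involved objects from $P_a$ onto $\Pplay$ and reaching $M_0'$ with $M_0'|_{P_1}=M_0$. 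Each original binding $(t_i,\beta_i)$ is then enabled in $N_2$: the added link inscription $\tup{x_{\sigma_m},x_{\sigma_o}}$ (or $\tup{X_{\sigma_m},x_{\sigma_o}}$) is satisfied by construction of $P_L$ together with conformance, and its read-write pattern leaves $P_L$ unchanged. A final collapsing sequence via $T_c$ consumes both the $\Pstop$ tokens and the corresponding $P_L$ tokens.

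For part (3), I argue by contradiction, establishing the following invariant on all reachable markings of $N_2$: for every $(\sigma_m,\sigma_o)\in R$ and every $o_m$ of type $\sigma_m$, the place $p_{(\sigma_m,\sigma_o)}$ holds at most one token of the form $\tup{o_m,\cdot}$. The invariant is shown by induction on the run: such a token can only be produced by $\tlink{\sigma_m,\sigma_o}$, which consumes $o_m$ from $P_b$ where $o_m$ was placed exactly once by $T_i$ via the fresh variable $\nu_{\sigma_m}$, so $o_m$ can participate in at most one link creation; the transitions $t\in T$ only read-and-write identical tokens on $P_L$; and $T_c$ can only remove tokens. Non-conformance of the underlying $N_1$-run means some $o_m$ is bound with two distinct one-side objects $o_o\neq o_o'$ in different binding executions, which would require the two distinct tokens $\tup{o_m,o_o}$ and $\tup{o_m,o_o'}$ to be read from $p_{(\sigma_m,\sigma_o)}$, contradicting the invariant. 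The hard part will be formulating and verifying this uniqueness invariant cleanly across all transition classes—especially for transitions of $T$ that access the many-side via a list inscription $X_{\sigma_m}$, where one must carefully argue that the read-back multiset of links is consistent with a single $o_o$—but once it is in place, parts (1)--(3) combine into the desired equivalence.
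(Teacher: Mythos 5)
Your parts (1) and (2) follow essentially the same route as the paper: project markings of $N_2$ onto $P_1$ and observe that the added $P_L$-inflows only strengthen enabledness (so $N_2$-firability implies $N_1$-firability), and conversely stage the $N_2$ run into an object-emission phase, a link-creation phase that binds each one-side object $o_o$ to the list of its co-occurring many-side objects, and a population phase, with the read--write pattern on $P_L$ keeping the links invariant. That matches the paper's argument, including the key observation that conformance makes the per-$o_o$ lists well-defined and pairwise disjoint.

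Part (3), however, has a genuine gap. Non-conformance to a stable many-to-one relationship $(\sigma_m,\sigma_o)$ is the failure of $|lo(o_m,\sigma_o)|=1$ (together with $lo(o_o,\sigma_m)\neq\emptyset$), and this can fail in three distinct ways: (i) $lo(o_m,\sigma_o)=\emptyset$, i.e.\ some many-side object is never co-bound with any one-side object; (ii) $|lo(o_m,\sigma_o)|>1$; and (iii) $lo(o_o,\sigma_m)=\emptyset$. Your uniqueness invariant on $p_{(\sigma_m,\sigma_o)}$ --- at most one token $\tup{o_m,\cdot}$ per $o_m$ --- only rules out case (ii). For the two existential violations you need a different argument: in any accepted run of $N_2$, every object of a type occurring in $R$ must pass through $T_i$, $T_L$ and $T_e$ (since the emitting transition for such a type reads from $P_a$, which is fed only by $T_L$), so every $o_m$ acquires exactly one link and every $o_o$ acquires a non-empty set of linked partners, and any later binding involving both types is forced to respect those links; this is how the paper discharges cases (i) and (iii). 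Without these cases your contradiction does not cover all non-conforming runs, so the claim that no run of the form \eqref{eq:Sync:rej:ex} exists is not yet established. The invariant itself is fine and is arguably a cleaner formulation of the paper's ``each object is considered exactly once by a link-creation transition and links are stable'' argument, but it must be complemented by the totality argument.
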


\begin{proof}
\begin{compactenum}
\item[(1)] % sync OPID to equiv OPID
$\Mempty \goto{*} M_0'$ in $N_2$ produces a set of objects through $T_i$, creates links
through $T_L$, and the transitions $t \in T_e$ populate the play places $p \in \Pplay$.
In $M_0'$, only the places $\Pplay$ and $P_L$ hold tokens, i.e. $M_0'(p) = \emptyset$ for all $p \in P_2 \setminus (\Pplay \cup P_L)$, because no transition $t \in T \setminus (T_e \cup T_L$ can remove the remaining tokens from any place $p \in P_b \cup P_a$, as for all those places and transitions it holds that $\inflow(p,t) = \bot$, and the execution would not be accepted for the final marking $\Mempty$.

For $N_1$, there exists a sequence $\Mempty \goto{*} M_0$ that can generate the same marking for the play places $p \in \Pplay \subset P_1$ by firing the appropriate emitting transition for every object.
Indeed, for all objects and places $p' \in \Pplay \subset P_2$, where $o \in M_0'(p')$ and $ot(o) = \sigma$, there exists an emitting transition $t \in T_e \subset T_1$ with an outflow to the according place $p \in \Pplay \subset P_1$ with $\coloring_2(p) = \sigma$ where $F_{out,1}(t,p) = \tup{\nu_\sigma}$. Therefore, the emitting transition can produce $o$ onto $p$.
Hence, given the marking $M_0'$ in $N_2$, $N_1$ can produce an equivalent marking $M_0$ for all places $p \in P_1$.

Next we show the following ($\star$):
For all $1 \leq i \leq n$, a transition firing $M_{i-1}' \goto{(t_i,\beta_i')} M_{i}'$ implies the existence of a firing
$M_{i-1} \goto{(t_i,\beta_i)} M_{i}$ where for all $p \in P_1$ it holds that $M_{i-1}(p) = M_{i-1}'(p)$.
For all transitions $t' \in T_1 \subset T_2$ of $N_2$ and $t \in T_1$ of $N_1$ and places $p' \in P_1 \subset P_2$ of $N_2$ and $p \in P_1$ of $N_1$, the inflow and outflow are the same for both nets $F_{in,1}(t,p) = F_{in,2}(t',p')$ and $F_{out,1}(t,p) = F_{out,2}(t',p')$.
Hence if the transition $(t_i,\beta_i)$ is enabled in marking $M_{i-1}'$ of $N_2$ then it is is also enabled in $M_{i-1}$ of $N_1$, and the resulting markings satisfy $M_{i}(p) = M_{i}'(p)$ for all places $p \in P_1$.

In $M_n'$, only the places $\Pstop$ and $P_L$ hold tokens, so $M_n'(p) = \emptyset$ for all $p \in P_2 \setminus (\Pstop \cup P_L)$. Otherwise, no transition $t \in T_c \subset T_2$ can remove the remaining tokens from any place $p$, as for all those places and transitions it holds that $\inflow(p,t) = \bot$, and the execution would not be accepted for the final marking $\Mempty$.
Consider the collapsing sequence $M_n' \goto{*} \Mempty$ in $N_2$ of only transitions $t' \in T_c$ and a marking $M_n$ in $N_1$, where $M_n(p) = M_n'(p)$ for all places $p \in P_1$.
There exists a collapsing sequence in $N_1$ consisting of only consuming transitions $t \in T_c$, resulting in $\Mempty$ as $F_{in,1}(p,t) = \tup{x_{\coloring_2(p)}}$ and $F_{out,1}(t,p') = \bot$ for any $p' \in P_1$.

A simple inductive argument using ($\star$), and the observations about the emitting, link creating, and the collapsing sequences, show that the run \eqref{eq:OPID2:run:ex} is possible.

\item[(2)] % conform equiv OPID to sync OPID
\label{theorem2:item2}
For the emitting sequence $\Mempty \goto{*} M_0$ in $N_1$ and the set of relations $R$, there is an emitting sequence $\Mempty \goto{*} M_b' \goto{*} M_a' \goto{*} M_0'$ in $N_2$ that sets the links correctly, as follows.
Let $S \subseteq \Sigma$ be the set of object types that occur in $R$.
\begin{compactitem}
\item
The sequence $\Mempty \goto{*} M_b'$ creates objects whose type occurs in $R$. To that end, it uses transitions $t_i \in T_i$ to obtain a marking $M_b'$ where for all places $p_b \in P_b$ with $\coloring_2(p_b) =\tup{\sigma}$ for $\sigma \in S$,
and the unique place $p\in \Pplay$ such that $\coloring_2(p_b) = \coloring_2(p)$, 
it holds that $M_b(p_b) = M_0(p)$ because $F_{out,2}(t_i,p_b) = \tup{\nu_{\sigma}}$.
\item
The sequence $M_b' \goto{*} M_a'$ uses only transitions $\tlink{\sigma_m, \sigma_o} \in T_L$, with the aim to populate $P_L$ with the required links.
For every relationship $(\sigma_m, \sigma_o) \in R$, there exist two places $p_{\sigma_m, (\sigma_m, \sigma_o)}, p_{\sigma_o, (\sigma_m, \sigma_o)} \in P_b$, one transition $t_{(\sigma_m, \sigma_o)} \in T_L$, two places $p_{(\sigma_m, \sigma_o), \sigma_m}, p_{(\sigma_m, \sigma_o), \sigma_o} \in P_a$, and one place $p_{(\sigma_m, \sigma_o)} \in P_L$
that satisfy $F_{in,2}(p_{\sigma_m, (\sigma_m, \sigma_o)}, t_{(\sigma_m, \sigma_o)}) = F_{out,2}(t_{(\sigma_m, \sigma_o)}, p_{\sigma_m, (\sigma_m, \sigma_o)}) = 
\tup{X_{\sigma_m}}$ (resp. $F_{in,2}(p_{\sigma_o, (\sigma_m, \sigma_o)}, t_{(\sigma_m, \sigma_o)}) = F_{out,2}(t_{(\sigma_m, \sigma_o)}, p_{\sigma_o, (\sigma_m, \sigma_o)}) = \tup{x_{\sigma_o}}$), as well as $F_{out,2}(\tlink{\sigma_m, \sigma_o}, p_{(\sigma_m, \sigma_o)}) = \tup{X_{\sigma_m}, x_{\sigma_o}}$.
For every object $o$, if $ot(o) = \sigma_m$ (resp. $ot(o) = \sigma_o$), then $\tup{o} \in M_b'(p_{\sigma_m, (\sigma_m, \sigma_o)})$ (resp. $\tup{o} \in M_b'(p_{\sigma_m, (\sigma_m, \sigma_o)})$).

For every transition $\tlink{\sigma_m, \sigma_o} \in T_L$ and every object $o_o \in \objects$ with $ot(o_o) = \sigma_o$, a binding $\beta_{o_o,\sigma_m}$ can be selected that creates all links of this type that are required later. 
Let $O_m$ be the set of all objects of type $\sigma_m$ that occur with $o_o$ in the run of $N_1$, i.e., all objects $o_m \in \objects$ such that $ot(o_m)=\sigma_m$ and $\{o_o,o_m\} \subseteq codom(\beta_i)$.
Then $\beta_{o_o,\sigma_m}(x_{\sigma_o})=o_o$ and
$\beta_{o_o,\sigma_m}(X_{\sigma_m})=[O_m]$.
As all required objects are present in the according places of $P_b$, $(\tlink{\sigma_m, \sigma_o}, \beta_{o_o,\sigma_m})$ can fire. 
Every object $o_o$ and $o_m \in O_m$ is only considered once by a transition $\tlink{\sigma_m, \sigma_o}$, as $F_{out,2}(\tlink{\sigma_m, \sigma_o}, p_{\sigma_m,(\sigma_m, \sigma_o)}) = F_{out,2}(\tlink{\sigma_m, \sigma_o}, p_{\sigma_o, (\sigma_m, \sigma_o)}) = \bot$.
Because the run in $N_1$ satisfies the relationship $(\sigma_m, \sigma_o) \in R$, for all bindings $\beta_i$ it holds that if $o_m \in \beta_i(x_{\sigma_m})$ and an object $o_o' \in \beta_i(x_{\sigma_m})$ of type $ot(o_o) = \sigma_o$, then $o_o' = o_o$. 
Therefore, for all $p_{(\sigma_m, \sigma_o)} \in P_L$ it holds that for every pair of links $(o_m, o_o),(o_m', o_o') \in M_a'(p_{(\sigma_m, \sigma_o)})$ if $o_m = o_m'$ then $o_o = o_o'$. All links in the places $P_L$ are set correctly for the run in $N_1$.

For all transitions and bindings $(t_i,\beta_i)$ in the run, the marking of all $p_{(\sigma_m, \sigma_o)} \in P_L$ is not changed as $F_{in,2}(p_{(\sigma_m, \sigma_o)},t) = F_{out,2}(t, p_{(\sigma_m, \sigma_o)})$ for all $t \in T \setminus (T_L \cup T_c)$. All relationships are stable.

\item
The sequence $M_a' \goto{*} M_0'$ fires transitions to populate $\Pplay$ for every object $o$ in the run: Let $ot(o)=\sigma$.
First, if $\sigma \not \in S$, there is a flow from the according $t_\sigma \in T_e$ to a place $p \in \Pplay$ with $F_{out,2}(t_\sigma, p) = \tup{\nu_{\sigma}}$.
Second, if $ot(o) \in S$, there is a place
$p_a \in P_a$ with $\coloring_2(p_a) = \tup{\sigma}$ that 
%$(\coloring_2(p_b), \sigma)$ or $(\sigma, \coloring_2(p_b), )$ in $R$ 
holds a token $\tup{o}$. Hence, a binding $\beta(x_{\sigma}) = o$ exists for all $F_{in,2}(p_a, t_e) = \tup{x_{\sigma}}$, resulting in a marking where $\tup{o} \in M_0'(p)$ because $F_{out,2}(t_e, p) = \tup{x_{\sigma}}$.
\end{compactitem}

Next we show the following ($\star$):
For all $1 \leq i \leq n$, a transition firing $M_{i-1} \goto{(t_i,\beta_i)} M_{i}$ where for all $p \in P_1$ it holds that $M_{i-1}(p) = M_{i-1}'(p)$, implies the existence of a firing $M_{i-1}' \goto{(t_i,\beta_i)} M_{i}'$, such that that $M_{i}(p) = M_{i}'(p)$.
For all transitions $t \in T_1 \setminus (T_e \cup T_c) = T_2 \setminus (T_i \cup T_L \cup T_e \cup T_c)$ and all places $p \in P_2 \setminus (P_L \cup P_b \cup P_a) = P_1$, it holds that all arc inscriptions are singleton tuples such that $F_{out,2}(t,p) = F_{out,1}(t,p)$, $F_{in,2}(p,t) = F_{in,1}(p,t)$.
If the binding $\beta_i$ considers objects $o_m \in O_m$ with $ot(o_m) = \sigma_m$ and $o_o$ with $ot(o_o) = \sigma_o$ of two types that are in a relationship $(\sigma_m, \sigma_o) \in R$, there is a place $p_{\sigma_m, \sigma_o} \in P_L \subset P_2$ where $O_m \times \{o_o\} \subseteq M_0'(p_{\sigma_m, \sigma_o})$.
Because the relationship is stable, we also have
$O_m \times \{o_o\} \subseteq M_{i-1}'(p_{\sigma_m, \sigma_o})$.
Therefore, for the binding $\beta_i$ from the run of $N_1$ can also be used in the run of $N_2$.
As inflow and outflow are equal for all $t_i$ and places $p \in (P_1 \cup P_L) \subset P_2$, for the resulting marking satisfies $M_{i}(p) = M_{i}'(p)$.

In $M_n$, only the places $\Pstop$ hold tokens, such that $M_n(p) = \emptyset$ for all $p \in P_1 \setminus \Pstop$. Otherwise, no transition $t \in T_c \subset T_1$ could remove the remaining tokens from any place $p \in P_1 \setminus \Pstop$, because $\inflow(p,t) = \bot$, and $\Mempty$ could not be reached.
For the collapsing sequence $M_n \goto{*} \Mempty$, there exists an equal collapsing sequence $M_n' \goto{*} \Mempty$ of transitions $t \in T_c \subset T_2$.
For every object $o_m \in \objects$ of type $ot(o_m) = \sigma_m$ and every relationship $(\sigma_m, \sigma_o)$, there exists a place $p_{\sigma_m, \sigma_o} \in P_L$ with exactly one token $(o_m, o_o) \in M_n'(p_{\sigma_m, \sigma_o})$. There exists a final place $p \in \Pstop$ with $o_m \in M_n'(p)$, and $F_{in,2}(p, t_e) = \tup{x_{\sigma_m}}$ and the link place with $F_{in,2}(p_{\sigma_m, \sigma_o}, t_e) = \tup{x_{\sigma_m}, x_{\sigma_o}}$. Hence, a binding $\beta'(\tup{x_{\sigma_m}, x_{\sigma_o}}) = \{o_m, o_o\}$ exists for every $o_m$ and is enabled.
Therefore, all objects can be removed with their many-to-one links.

A simple inductive argument using ($\star$), and the observations about the emitting and link creating run, and the collapsing run, shows shows that the total run \eqref{eq:Sync:run:ex} is possible.

\item[(3)] % non-conform equiv OPID to sync OPID
A run in $N_1$ is not conform to a stable relationship $(\sigma_m, \sigma_o) \in R$, if there exist objects $o_m, o_o \in \objects$, where $ot(o_m) = \sigma_m$ and $ot(o_o) = \sigma_o$, that contradict the condition $|lo(o_m, \sigma_o)| = 1$ and $lo(o_o, \sigma_m) \neq \emptyset$.
A contradiction for $N_1$ is posed by any of the following three options: (i) $lo(o_m, \sigma_o) = \emptyset$, (ii) $|lo(o_m, \sigma_o)| > 1$, or (iii) $lo(o_o, \sigma_m) = \emptyset$.

All transitions $t \in T_1 \subset T_2$ that operate both object types $\sigma_m$ and $\sigma_o$, also read from the link place for $(\sigma_m, \sigma_o)$ as $F_{in,2}(p_{\sigma_m, \sigma_o}, t) = F_{out,2}(t, p_{\sigma_m, \sigma_o}) = \tup{X_{\sigma_m}, x_{\sigma_o}}$.
Any binding $\beta_i$ for the transition $t$ operating on $\sigma_m$ and $\sigma_o$ is bound to the link place $p_{\sigma_m, \sigma_o}$. A binding $\beta_i$ including $o_m$ and $o_o$ is only enabled in the marking $M_i'$ if  $\tup{o_m, o_o} \in M_i'(p_{\sigma_m, \sigma_o})$.
$lo()$ is defined on the bindings $\beta_i$ of the run $M_0' \goto{*} M_n'$.

Therefore, we show that the links for $(\sigma_m, \sigma_o)$ can not be constructed in any way such that any of the contradictions (i-iii) are possible to hold in an accepted run of $N_2$.
Note that every object $o_m$ and $o_o$ is considered exactly once for a transition that creates links for the relationship $(\sigma_m, \sigma_o)$ in $N_2$ and during the run $M_0' \goto{*} M_n'$, all links in $p_{\sigma_m, \sigma_o}$ are stable (cf. item~(2)).
Hence, the following arguments hold true:
\begin{compactitem}
    \item[(i)] To satisfy $lo(o_m, \sigma_o) = \emptyset$, there must exist no binding that operates on $o_m$ and any $o_o$ of $ot(o_o) = \sigma_o$.
    For any accepted run, $o_m$ must participate in the link creation and there exists an object $o_o$ such that $p_{\sigma_m, \sigma_o}$ holds $\tup{o_m, o_o}$. A run in $N_2$ can only be accepted if any binding over $o_m$ and an object of type $\sigma_o$ binds $o_m$ to the appropriate $o_o$. Hence, $lo(o_m, \sigma_o) \neq \emptyset$
    \item[(ii)] To satisfy $|lo(o_m, \sigma_o)| > 1$, bindings must exist that correlate $o_m$ to two objects of type $\sigma_o$.
    However, during link creation, any $o_m$ can be linked to only one object of the type $ot(o_o) = \sigma_o$ such that in any marking only one link exists $| \{ \tup{o_m',o_o'} \in M_i'(p_{\sigma_m, \sigma_o}) \mid o_m = o_m' \} | = 1$.
    Any two bindings that consider $o_m$ and any object $o_o$ must read from the place $p_{\sigma_m, \sigma_o}$ and bind objects according to the links. Therefore, both must bind the same $o_o$. Hence, $|lo(o_m, \sigma_o)| \leq 1$.
    \item[(iii)] Analog to item (i), $lo(o_o, \sigma_m) = \emptyset$ is not possible as $o_o$ must participate in the link creation to be in an accepted run. As all bindings over $o_o$ and any objects of type $\sigma_m$ must adhere to the links. Hence $lo(o_o, \sigma_m) \neq \emptyset$ 
\end{compactitem}
By showing that all three contradictions must be false, the theorem holds true. No run \ref{eq:Sync:rej:ex} exists for $N_2$ that creates links for a non-conform execution.
\qed
\end{compactenum}
\end{proof}

\end{document}